\newtheorem{theorem}{Theorem}
\newcommand{\RN}[1]{%
  \textup{\uppercase\expandafter{\romannumeral#1}}%
}
\newcommand{\SC}{\text{SC}}
\begin{document}
\newcommand{\condition}{\text{symmetry-decoupling }}
\newcommand{\Condition}{\text{Symmetry-decoupling }}
%\title{Decohered Symmetry-Protected Topological States \\as a Resource for Quantum Communication}
\title{Quantum Communication and Mixed-State Order\\ in Decohered Symmetry-Protected Topological States}
%\date{\today}
\author{Zhehao Zhang}
\affiliation{Department of Physics, University of California, Santa Barbara, CA 93106, USA}	

\author{Utkarsh Agrawal}
\affiliation{Kavli Institute for Theoretical Physics, Santa Barbara, CA 93106, USA}
				
\author{Sagar Vijay}
\affiliation{Department of Physics, University of California, Santa Barbara, CA 93106, USA}	

\begin{abstract}
Certain pure-state symmetry-protected topological orders (SPT) can be used as a resource for transmitting quantum information. Here, we investigate the ability to transmit quantum information using decohered SPT states, and relate this property to the ``strange correlation functions" which diagnose quantum many-body orders in these mixed-states.  This perspective leads to the identification of a class of quantum channels -- termed symmetry-decoupling channels -- which do not necessarily preserve any weak or strong symmetries of the SPT state, but nevertheless protect quantum many-body order in the decohered mixed-state. We quantify the ability to transmit quantum information in decohered SPT states through the coherent quantum information, whose behavior is generally related to a decoding problem, whereby local measurements in the system are used to attempt to ``learn" the symmetry charge of the SPT state before decoherence.   
\end{abstract}

 \maketitle

\tableofcontents

\vspace{2em}

\section{Introduction}
The interplay between symmetries and quantum entanglement in a quantum many-body system can give rise to symmetry-protected topological (SPT) orders \cite{senthil2015symmetry}: gapped, symmetric states of zero-temperature quantum matter which are nevertheless distinct when the symmetry is preserved. Recently, the behavior of SPT orders under the effects of interactions with an external environment have been considered \cite{de2022symmetry,chen2023symmetry, lee2023symmetry, zhang2022strange, ma2023average, ma2023topological,  roberts2017symmetry, roberts20243, lessa2024mixed, xue2024tensor, guo2024locally, ma2024symmetry, Wang_li_2024_double_state, chirame2024stable}, partly motivated by the advent of increasingly powerful quantum hardware capable of imperfectly engineering quantum states \cite{chen2023realizing, foss2023experimental, bluvstein2024logical, iqbal2024non, zhu2023nishimori, lee2022decoding,lu2023mixed, tantivasadakarn2021long, tantivasadakarn2023hierarchy, lu2022measurement}  in the presence of spurious interactions with an external environment. Understanding the robustness of quantum collective phenomena in this setting is an important fundamental question, with additional significance if the state of interest is to be used for quantum applications.

\begin{figure*}
    \centering
    \includegraphics[width=1\textwidth]{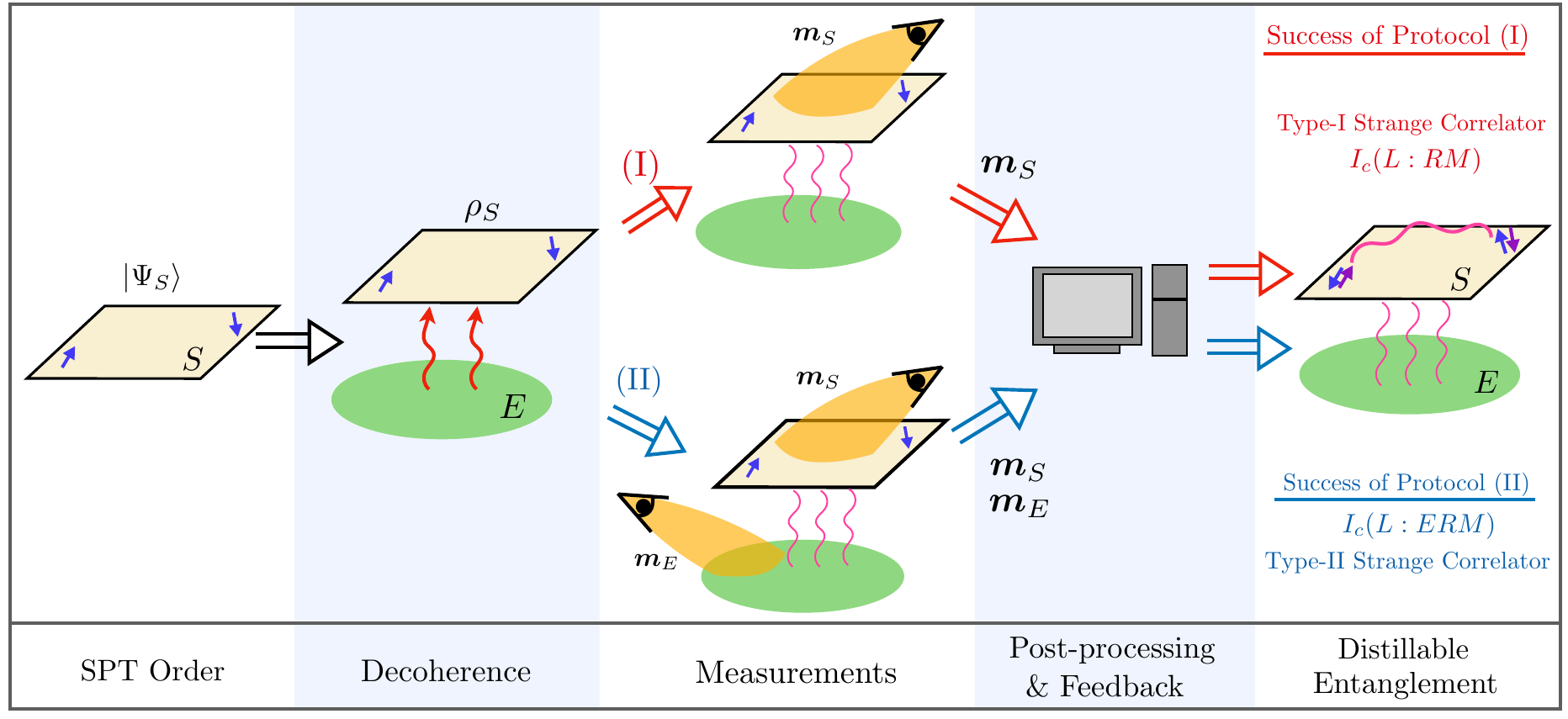}
    \caption{Schematic depiction of the setup considered in this work. The system of interest $S$ is prepared in an SPT state, and ancillas (not shown) are entangled with one end of the the system. After bulk decoherence, we determine whether quantum information can be transferred across the system, by measuring ($I$) the symmetry charge in the bulk of the SPT order, and ($II$) with possible additional measurements of the environment $E$.   Feedback, based on these measurement outcomes is used to distill quantum entanglement. The success of either of these protocols only depends on the decoherence channel that is applied, and is insensitive to unphysical properties of the purification of this channel into a particular state of the environment.  The ability to distill entanglement in these settings is quantified by the coherent quantum information and is directly related to the ``type-I" and ``type-II" strange correlation functions in the decohered state, as indicated.}
    \label{fig: Fig 1}
\end{figure*}

Investigating the behavior of SPT states under the effects of local decoherence is particularly important in this context, since the ground-states for certain SPT orders are believed to be resource states for measurement-based quantum computation (MBQC) and teleportation~\cite{nautrup2015symmetry,wei2017universal,miller2015resource,stephen2017computational,raussendorf2017symmetry,raussendorf2019computationally,devakul2018universal, raussendorf2023measurement,else2012symmetry,Raussendorf_2001,Raussendorf_2003,Briegel_2009}; measurements on these SPT wavefunctions, followed by unitary feedback, can be used to distill patterns of long-range entanglement.  This property is intimately related to universal aspects of these specific SPT orders, which possess a mixed anomaly between two on-site, unitary symmetries. This property ensures that when one symmetry charge is measured within a contiguous region, the degrees of freedom at the boundaries become entangled~\cite{Marvian_2017,lu2023mixed,tantivasadakarn2021long}.  Non-local observables (string/membrane order parameters) in the SPT ground-state~\cite{pollmann2012detection} also diagnose the interplay between these symmetries~\cite{Else_2014_SPT_classification}, while strange correlators~\cite{you2014wave} detect the non-trivial interface between these SPT orders and trivial, gapped symmetric pure-states.  These probes of quantum many-body order further reveal that a pure SPT state cannot be prepared from a symmetric, product state by a constant-depth, symmetric unitary circuit~\cite{chen2010local,huang2015quantum}.    

In this work, we study the ability of \emph{decohered} SPT states to transmit quantum information, and  connect this property to the orders within these mixed quantum many-body states.  Specifically, we find that the ability to use decohered SPT states as a resource for quantum information transfer is related to strange correlation functions \cite{you2014wave,lee2023symmetry,zhang2022strange}, which are sensitive to orders in these mixed-states, when they are interpreted as wavefunctions in a doubled Hilbert space. More precisely, quantum information which can be transferred by performing symmetric measurements in the bulk of the decohered SPT is related to long-range-order in the ``type-I" strange correlation function, while successful information transfer which can be achieved with the additional feature that one has access to the decohering environment is related to order in ``type-II" strange-correlators \cite{lee2023symmetry}.  The former property indicates that the mixed SPT state behaves like a quantum error-correcting code~\cite{dennis2002topological, raussendorf2007fault}, which can have a threshold strength of decoherence, below which coherent information transfer is possible~\cite{fan2023diagnostics, su2024tapestry,li2024replica, chen2023separability, sang2023mixed, myerson2023decoherence, chen2024unconventional, sohal2024noisy, sang2024stability, lu2024disentangling, ellison2024classification,lavasani2024stability}. These protocols for transmitting quantum information in decohered SPT states, and the connection to mixed-state many-body orders, are shown schematically in Fig. \ref{fig: Fig 1}.

The relationship between the ability to use decohered SPT states as a resource for quantum communication, and mixed-state order is particularly fruitful, as it allows us to introduce a class of decoherence channels -- termed \emph{\condition} channels -- which do not preserve any (strong or weak) symmetries of the mixed SPT state, but which nevertheless preserve orders in the resulting decohered density matrix, as diagnosed by non-trivial strange correlation functions.  An understanding of this result by considering the density matrix as the ground-state of a quantum many-body system in a doubled Hilbert space -- a setting in which symmetries would play a privileged role in interpreting the resulting mixed-state orders -- remains an open question of our work.  We also argue that that the ability to perform quantum communication using mixed SPTs implies that such mixed states cannot be prepared in finite time and resources~\cite{chen2023symmetry}, suggesting that mixed SPT states obtained through the action of a \condition channel could be in a different mixed quantum many-body phase from trivial symmetric states which have been affected by decoherence.  We leave thorough investigation of this possibility to future work.  

We introduce two probes of coherent quantum information that quantify the amount of  information that can be transferred through a decohered SPT state by performing symmetric bulk measurements, and perform calculations of these for SPTs in various dimensions, where these information-theoretic quantities exhibit phase transitions as the strength of the bulk decoherence is increased.   
The coherent quantum information is advantageous as a diagnostic of mixed-SPT order. The behavior of strange correlation functions can be sensitive to the choice of trivial, symmetric state which defines these correlations, while no such ambiguity is encountered in the definition of the coherent quantum information.  Furthermore, in examples where quantum information can be successfully transmitted across boundaries of a decohered SPT, the coherent quantum information makes precise the notion of a quantum-coherent ``edge" of the mixed SPT state. 

\subsection{Setup and Summary of Results}

We now provide a detailed summary of our results. The general setup that we study begins with a system in a symmetric trivial state, with respect to an on-site unitary symmetry. The system has open (left and right) boundaries. We appropriately entangle ancilla qubits $L$ to the left boundary of the system, and then prepare the remaining qubits in an SPT state with respect to this symmetry.  The precise nature of the entanglement of the left boundary with the ancilla may depend on the SPT state, as we clarify in subsequent sections.     
We then perform measurements of the on-site symmetry charge at all sites, except at the two boundaries, in order to attempt to coherently transmit quantum information between these regions~\cite{Marvian_2017}.   The general setup  for a pure-state SPT order in one spatial dimension (without any decoherence) is indicated Fig.~\ref{fig: MBQC setup}.

For pure SPTs without decoherence, some amount of quantum information in the ancilla qubits $L$  gets transmitted to the right boundary $R$, which can be ``decoded" using the bulk measurement outcomes.  This property can also provide a probe of the SPT order. {In the SPT phase this ability to transmit quantum information is quantified by the coherent quantum information $I^{(0)}_c(L:RM)>0$ which is zero in a trivial symmetric state (the superscript $(0)$ is to denote the coherent quantum information when the \emph{pure} SPT state is used to transmit information).} Here $M$ denotes the classical set of measurement outcomes in the bulk. 
The theory of approximate quantum error-correction \cite{Schumacher_1996} guarantees that $I_c^{(0)}(L:RM)$ provides a lower bound to the fidelity with which the state of $L$ can be transmitted to $R$, $M$ via this protocol.

Due to the non-zero information being transmitted, there are logical spaces at the $L$ and $R$ boundaries that are entangled after measurement of the symmetry charge. The decoding or recovery of the information then involves distilling the logical space from the boundary space and identifying the particular nature of the entanglement between them.

An appropriate recovery map is then applied at the right boundary to recover the logical information. We review these concepts for pure cluster state SPT orders in various dimensions in Sec.~\ref{sec: MBQC pure}.

\begin{figure}
    \centering
    \includegraphics[width=\linewidth]{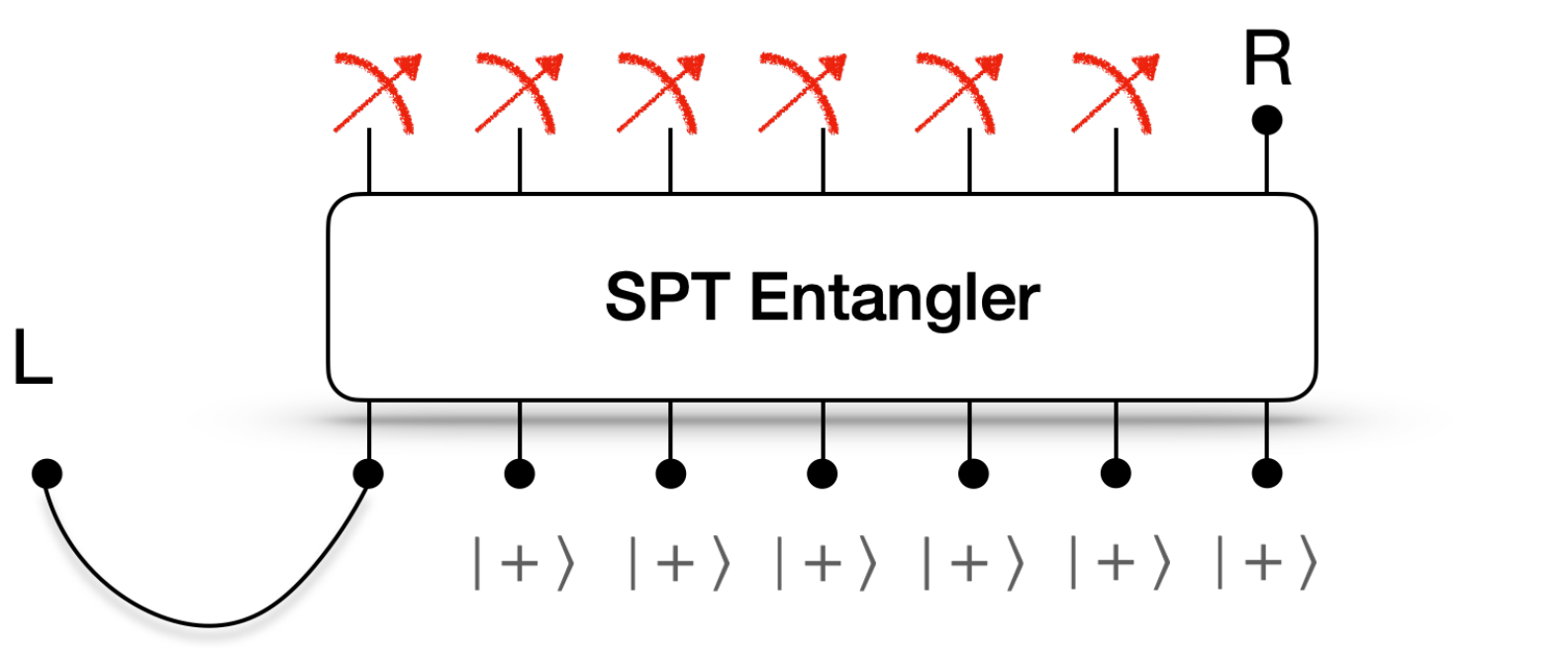}
    \caption{Transmitting quantum information using a one-dimensional, pure symmetry-protected topological order.  An ancilla ($L$) is entangled with the left boundary of a symmetry-protected topological order.  Bulk measurements of the symmetry charge can generate entanglement between $L$ and $R$, which can be used to send quantum information.}
    \label{fig: MBQC setup}
\end{figure}

A similar setup can be considered for a decohered SPT state, where we now measure the on-site symmetry charge in the bulk of the mixed SPT state. In standard quantum error correction, where logical information may be retained or leaked into the environment, access to the environment qubits allows for perfect recovery of the logical information. However, measurements in the symmetry  basis in the decohered SPT can in principle \textit{destroy} information about the symmetry charge. Therefore, quantum information may not be perfectly transmitted in this setting, even if one has access to the environment decohering the SPT state.

To quantify the ability to transmit information in this setting, we therefore introduce two types of coherent information, 1) information transmission with access to the environment, $I_c(L:ERM)$, and  2) information transmission without having access to the environment, $I_c(L:RM)$. Our first result identifies the classes of decoherence channels that allow information to be perfectly transmitted after measurements of the symmetry charge in the system, and with additional access to the environment. This class of channels, which we refer to as \condition channels, contains as a subset, weakly-symmetric channels (see their definitions in Sec.~\ref{sec: Information transmission using Mixed SPT}) which have been previously studied~\cite{lee2023symmetry, ma2023average, de2022symmetry}. Interestingly there are classes of \condition channels that are \textit{not} weakly symmetric, but allow for perfect transmission of quantum information. 

A more practical scenario is when we don't have access to the environment. Whether we can still use the system alone as a resource for information transmission is not obvious. $I_c(L:RM)$ quantifies the information recoverable about $L$ using only the right boundary $R$ and measurement outcomes on the system. We prove a bound on this quantity which depends on the amount of information about the environment that can be decoded using the system's measurement outcomes $M$. More precisely, if we can learn the value of the symmetry charge ``leaked'' into the environment from the measurement outcomes on the system then $I_c(L:RM)=I_c(L:ERM)$. That is, even in the presence of decoherence, the SPT state can be used as a resource state for quantum information transmission with the same fidelity as without decoherence. 

We also show that $I_c(L:RM)$ serves as an order parameter to distinguish different mixed-state phases based on their power to transmit quantum information. We want to emphasize that $I_c(L:RM)$ probes the intrinsic property of a mixed SPT state, and is independent of the existence of any environment state. In the example of $Z$-dephased 2d cluster state, we identify two phases distinguished by their ability to transmit quantum information, as is diagnosed by $I_c(L:RM)$. The phase diagram is the same as the previous work~\cite{chen2023symmetry}, which uses an Edwards-Anderson-like string-order parameter.

One of the key results of the paper is that the above coherent information quantities are directly related to the two types of strange correlators introduced in~\cite{lee2023symmetry}. More precisely, $I_c(L:RM)$ and $I_c(L:ERM)$ being non-zero implies that type-I and type-II strange correlators with {respect to a typical trivial density matrix} are, respectively, long-ranged.

\subsection{Symmetry-Decoupling Channels}
The \condition channels provide a new result of the paper. These channels are not necessarily strongly nor weakly symmetric. We first recall that in a strongly-symmetric channel the symmetry charge of the system remains entirely in the system, while for a weakly-symmetric channel, the charge is partially transferred into the environment and the total charge is the sum of charge in the system and environment. 
On the other hand, the charge in the \condition channels has no well-defined charge operator supported on the system and environment independently and the total charge is not given by a sum of charges from the two. However, after measurement of the original charge within the system, the remainder of the system becomes weakly symmetric, that is, the symmetry charge is now a sum of the measured charge on the system and a corresponding charge operator on the environment. 

We illustrate these ideas with a simple example. {Consider a system with an on-site $Z_{2}$ symmetry, and $X$ measures the on-site symmetry charge at a single qubit.}  After interactions with an external environment at that site, this operator evolves into $X'$. For weakly-symmetric channels, the modified charge is $X'=X\otimes O$ for some operator $O$ supported within the environment, and the total charge is the product of charges associated with $X$ and $O$. In contrast, the general action of a \condition channel yields $X'= X\otimes O_1 + I\otimes O_2$ (up to overall normalization) where $O_{1,2}$ are operators which are exclusively supported on the environment. After measurement of $X$ in the system, yielding outcome $m=\pm 1$, the symmetry charge is given by $X' = mO_1 + O_2$ which is only supported in the environment. {Thus the symmetry charge ``decouples" from the system and has perfectly leaked in to the environment after measurements within the system are performed. The  post-measurement state of the system is thus weakly symmetric. This suggests an alternate definition for symmetry-decoupling channels:  these are the channels for which strong on-site projective measurements of the symmetry charge a sub-region $A$, after application of this channel, leads to the emergence of a weak symmetry in the density matrix for the complementary region, 
as depicted in Fig.~\ref{fig: SDC intuition}.}

\begin{figure}
    \centering
    \includegraphics[width=1\linewidth]{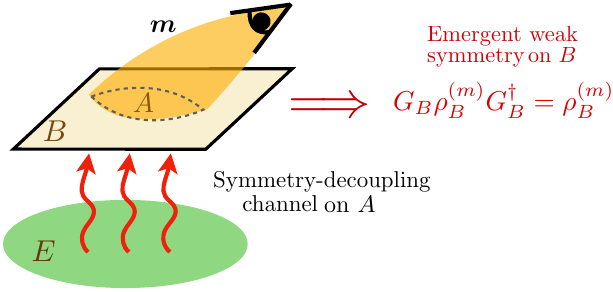}
    \caption{Schematic depiction of a symmetry-decoupling channel affecting region $A$. Subsequently, the symmetry charge in $A$ is measured.  The resulting density matrix in the complement $B$ , $\rho_B^{(m)}$, now has an emergent weak symmetry. For generic non-symmetry-decoupling channels, the post-measurement state $\rho_B^{(m)}$ is not weakly symmetric.}
    \label{fig: SDC intuition}
\end{figure}

We also present a quantum error-correction perspective on these probes. More precisely, the measurement of the bulk qubits can be interpreted as a unitary quantum evolution in a virtual time direction. Decoherence introduces noise in this virtual evolution. The behavior of coherent information $I_c(L:RM)$ is thus mapped to the decodability of the noisy quantum evolution. If the virtual quantum dynamics has an error threshold then the coherent information $I_c(L:RM)$ is positive for decoherence strengths below this threshold. 

The above results are believed to hold for any initial pure-state which lies within the SPT phase. Assuming that information transition in the pure case is stable against symmetric perturbations, a symmetric low-depth circuit on the system before decoherence should not change the above results as long as we thicken the boundaries with the thickness scaling with the depth of the circuit. This implies that the SPT phase consisting of ground states of SPT Hamiltonians is stable against decoherence since any pure state from SPT phase can still be used for quantum communication after being acted upon by decoherence.

\section{Information Transmission with pure SPT Order}\label{sec: MBQC pure}

Before moving to the discussion of mixed states we briefly review the protocol for pure states. This also helps in setting up the notation for later use.
As noted in the previous section, the SPT phase can be characterized by its resourcefulness to transmit quantum information from one end to another by measurement of bulk qubits in a symmetric basis. The setup is illustrated in Fig.~\ref{fig: MBQC setup}. We entangle $d_L$ ancillas $L$ with the first layer of qubits of the system at $i=1$, where $d_L$ is a number which we will clarify down below. We measure the system qubits except for qubits at the right-most boundary $R$. Let the measurement outcome be $m$. The amount of coherent information in $L$ transmitted to $R$ is given by\begin{align} \label{eq: type-1 coherent information from mutual information}
    I_c(L:RM) = \sum_m p_m (I^{(m)}(L:R)-S(\rho_L^{(m)})),
\end{align}
where $p_m$ is the probability of observing measurement outcome $m$, $\rho^{(m)}$ is the density matrix on the boundaries post-measurement, $I^{(m)}(L:R)$ is the mutual information between $L$ and $R$ for given measurement outcome $m$. See Appendix~\ref{app: i_c} for proof of the above expression. Since for pure SPTs $L,R$ are in a pure state after measurement the coherent information can be written as \begin{align}
I_c(L:RM) = \sum_m p_m S(\rho_R^{(m)}). \label{eq: I_c pure SPT}    
\end{align}
The state is in the SPT phase if the coherent information is non-zero and positive in the thermodynamic limit ~$N\rightarrow \infty$.
 In other words, the boundaries become long-range entangled post-measurement. 
 
\subsection{Logical operators} \label{sec: logical space}

For positive coherent information in eq.~\eqref{eq: I_c pure SPT} there are logical operators on the $L$ boundary that are transmitted to the $R$ boundary. We identify these logical operators as the restriction of symmetry action on the boundary of the systems as follows. 

Let $G=\prod_i g_i$ be the symmetry action of an element of the symmetry group, where $i$ runs over all qubits and $g_i$ is the on-site action of the symmetry. 
Starting from an eigenstate of $G$, post measurement of $g_i$ in the bulk the state is an eigenstate of $\prod_{i\in L}g_i \prod_{j\in R}g_j$, where $i\in L,R$ are qubits on the left, right boundary. $R$ can measure $\prod_{i\in R}g_i$ and learn the value of symmetric charge in $L$. This immediately implies the transmission of the classical information stored in the operator $G_L=\prod_{i\in L}g_i$ as follows: initialize the $L$ boundary in eigenstates of $G_L$ according to some classical probability distribution. The eigenvalue of $G_L$ can be determined by measuring the qubits in the bulk and the $R$ boundary. Repeated run thus allows $R$, with the help of bulk outcomes, to learn the probability distribution (or the classical information). Furthermore, quantum information is transmitted when the restrictions for different group elements are not commuting. This suggests a connection between mixed anomalies between different symmetry groups/elements and the transmission of quantum information.

In 1d SPT states with open boundaries, the symmetry action when restricted to the boundary forms a projective representation of the symmetry group. These projective representations consist of non-commuting operators and thus result in transmitting quantum information. For projective representation $\omega$ of the group $G$, if no two group elements commute then the number of qubits in the logical space is given by $d_L=\log_2\sqrt{|G|}$, where $|G|$ is the size of the group $G$ (recall we are working with finite Abelian groups); see Chapter 6, Theorem 6.6 in~\cite{Group_2019}. This condition is often referred to as maximally noncommutative (MNC) condition~\cite{else2012symmetry}. The size of the logical space is reduced when the representation $\omega$ contains commuting operators~\cite{Group_2019}.

As argued above, a logical space exists on the boundaries of SPTs. This logical space can be decoupled from the non-logical space by applying a unitary localized at the boundaries. That is, after measuring all the qubits in, the bulk the density matrix of the boundaries plus the measurement outcomes can be written as,
\begin{align}
    &\rho_{L,R,M} = \sum_m p_m \rho_{L,R}^{(m)}\ket{m}\bra{m} \label{eq: logical space decomposition} \\
    &= U^\dagger  \sum_m p_m \ket{\Phi^m}\bra{\Phi^m}\otimes \rho_{L,\mathrm{rest}}^m \rho_{R,\mathrm{rest}}^m \otimes \ket{m}\bra{m} U,\nonumber
\end{align}
where $U=U_L U_R$ are local unitary rotation on $L,R$ to decouple the logical space, $\ket{\Phi^{m}}$ is an entangled state between $L,R$ residing in the logical space, and $\rho^{(m)}_{L,R,\mathrm{rest}}$ are the remaining non-logical degrees of freedom on the boundaries not carrying any logical information. We will skip writing the unitary rotation $U$ for brevity. It can be easily seen that ~{$I_c(L:RM) = \sum_m p_m S(\mathrm{Tr}_L \ket{\Phi^m}\bra{\Phi^m})$}. $\ket{\Phi^m}$ belongs to a $2^{d_L}$ dimensional Hilbert space where $d_L$ is the number of logical qubits.
Different measurement outcomes $m$ can lead to the same $\ket{\Phi^m}$ and thus we define equivalency class $\gamma$ as values of $m$ resulting in the same $\ket{\Phi^m}$. $\gamma$ can be thought of as a many-to-one function from the space of measurement outcomes $m$ to the logical space. We therefore denote the state on the logical space by $\ket{\Phi_\gamma}$. Note that the states $\ket{\Phi_\gamma}$ need not be orthogonal. We then have, \begin{align}
    \rho_{L,R,M,\Gamma} &= \sum_{\gamma}p_\gamma \ket{\Phi_\gamma}\bra{\Phi_\gamma} \otimes \rho_{LRM,\mathrm{rest}}^{(\gamma)} \otimes \ket{\gamma}\bra{\gamma} \label{eq: full decomposition}
\end{align}
where $\rho_{LRM,\mathrm{rest}}^{(\gamma)}=\sum_{m|\gamma} p_{m|\gamma}\ \rho_{L,\mathrm{rest}}^{(m,\gamma)} \otimes \rho_{R,\mathrm{rest}}^{(m,\gamma)} \otimes\ket{m}\bra{m}$, and $\Gamma$ is collection of all possible values of $\gamma$. Since the above density matrix is decoupled, we either measure or trace out the non-logical degree of freedoms to get \begin{align}
    \rho_{L,R,\Gamma} = \sum_{\gamma}p_\gamma \ket{\Phi_\gamma}\bra{\Phi_\gamma} \otimes \ket{\gamma}\bra{\gamma}
\end{align} and the coherent information accessible using $\Gamma$ is given by
\begin{align}
    I_c^{(0)}(L:R\Gamma) &= S(R\Gamma) - S(LR\Gamma)\nonumber \\
    &= \sum_\gamma p_\gamma S(\mathrm{Tr}_L \ket{\Phi_\gamma}\bra{\Phi_\gamma}),\\
    &= I_c(L:RM)\nonumber
\end{align}

{Coherent information is maximum for when each post-measurement state in logical space is maximally entangled. Also, it was argued in~\cite{Marvian_2017} that for SPTs protected by Ableian symmetry, $\gamma$ is given by the symmetry charge in the bulk, that is, for Abelian symmetry, $\gamma = \prod_{i}m_i$. This is intuitive since to have a stable notion of quantum phase for quantum communication $\gamma$ has to be robust against weak symmetric perturbation in the bulk and should be a function of the bulk symmetry charge.

From now on, we take $\gamma$ to mean the symmetry charge in the bulk. This also implies that any symmetric perturbation in the bulk should not affect the information transmission as the value of $\gamma$ does not change under the perturbation. We will come back to this point later when we talk about moving away from the fixed point.}

\subsection{Examples of Protocols Using Pure SPT's}\label{sec: examples pure case}
We now provide concrete examples to illustrate these ideas:

\emph{Cluster State in $d=1$:} The Hamiltonian of the 1d cluster state~\cite{raussendorf2003measurement}, which describes a zero-correlation-length limit of a one-dimensional SPT phase protected by an on-site $Z_{2}\times Z_{2}$ symmetry is given by
\begin{equation}
    H = -\sum_{i} Z_{i-1}X_{i}Z_{i+1}.
\end{equation}
It is known as a $Z_2 \times Z_2$ SPT, where the two symmetry charges are $G_{0} = \prod_{i}X_{2i+1}$ and $G_{e} = \prod_{i} X_{2i}$. The string-order parameter that distinguishes the non-trivial $Z_2 \times Z_2$ SPT state to a $Z_2 \times Z_2$ symmetric trivial state is given by
\begin{equation}
    \mathcal{S}_e = Z_{2i} \prod_{k = i}^{j - 1} X_{2k + 1} Z_{2j} = 1,
\end{equation}
with similar definition of $S_o$ whose two end points live on the odd sublattice. When the system ends on site $L$ and site $R$ (both living on the odd sublattice), we keep terms in the Hamiltonian that are fully supported in the bulk. Then, we can interpret $L$ and $R$ as our logical subspaces, and the logical operators are given by $X_{L(R)}$ and $Z_{L(R)}$. \newline

After measuring qubits in bulk with measurement outcome $m$, the post-measurement state is given by,\begin{align}
     \ket{\psi_{LR,m}} = Z_L^{\gamma_{e}}X^{\gamma_{o}}_L \frac{\ket{11}+\ket{00}}{\sqrt{2}}, \label{eq: 1d cluster state decoding}
 \end{align} 
where $\gamma_{e} = \prod_{i} m_{2i}$ and $\gamma_{o} = \prod_{i=i}m_{2i + 1}$ are nothing but the measurement outcome of the bulk $Z_2 \times Z_2$ symmetry charges. Together $(\gamma_e,\gamma_o)$ identify the type of entangled pair in the logical space and hence is enough to transmit the information. We have $I_c(L:R\Gamma)=1$ where $\Gamma=(\gamma_e,\gamma_o)=(1,1),(1,0),(0,1),(0,0)$ are the possible values of $\gamma$. 

\begin{figure}
    \centering
      \includegraphics[width=0.5\textwidth]{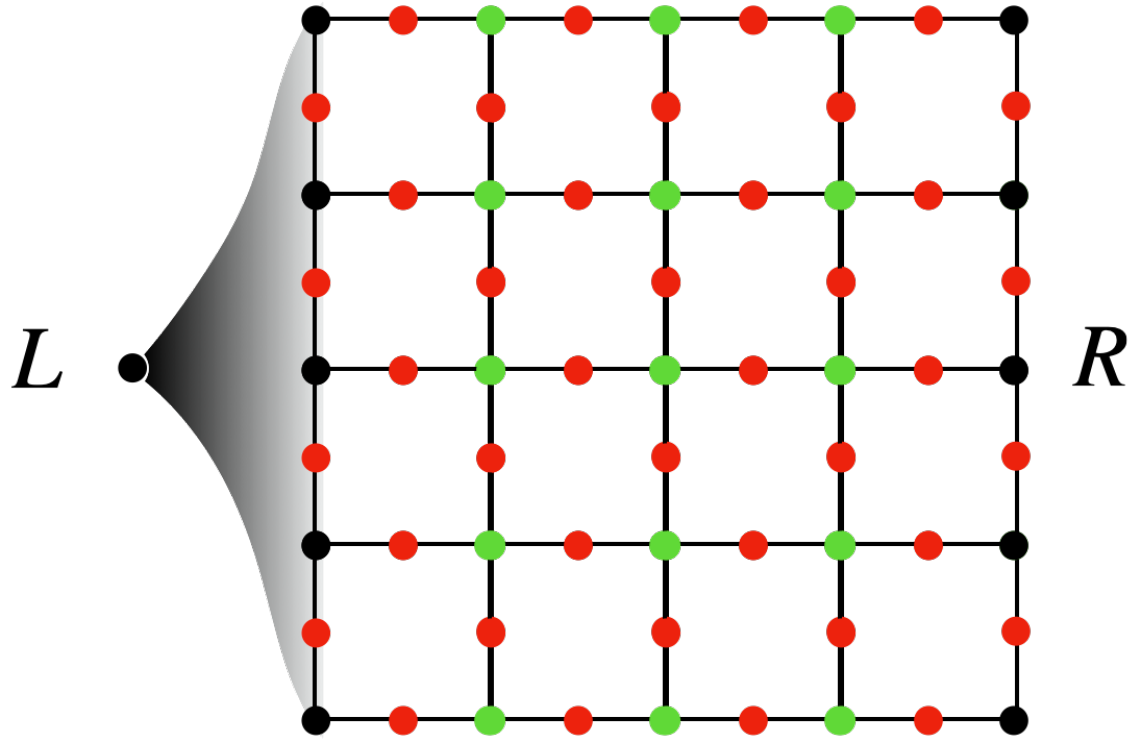}
    \caption{The qubit $L$ is entangled with a subspace of the left boundary of the 2d cluster state, corresponding to the codespace of a one-dimensional repetition code. The repetition code encoding is used in hindsight, with the knowledge that the cluster state with bulk measurement transmits the repetition code space. However, we note that the probe of quantum information transfer between $L$ and $R$ does not require prior detailed knowledge of this codespace. For example, extensively many ancillas can be entangled with the left boundary, to detect the coherent transmission of a qubit of quantum information through the bulk.   }
    \label{fig:2d mbqc}
\end{figure}

\emph{Cluster State in $d=2$:} The 2d cluster state~\cite{raussendorf2005long} describes the zero-correlation-length limit of a two-dimensional SPT phase protected by an on-site, global $Z_{2}$ symmetry, along with a one-form $Z_{2}$ symmetry.  This state is defined on two dimensional Lieb lattice (Fig.~\ref{fig:2d mbqc}) with qubits defining on both vertices and edges. The Hamiltonian of the 2d cluster state is given by
\begin{equation}\label{eq: Hamiltonian of the 2d cluster state}
    \begin{split}
        H = -\sum_{v} X_{v} \prod_{e \ni v} Z_e - \sum_{e} X_e \prod_{v \in e} Z_v.
    \end{split}
\end{equation}where $v$ labels qubits on the vertices and $e$ labels qubits on the edges. The 2d cluster state is known as a $Z_2^{(0)} \times Z_2^{(1)}$ SPT~\cite{yoshida2016topological}, where the upper index labels $0$-form and $1$-form symmetry. The symmetry action of $Z_2^{(0)}$ is given by $G_V = \prod_{v\in V} X_v$ where $V$ is the set of all vertices, and the symmetry action of $Z_2^{(1)}$ is given by $G_E = \prod_{e\in C} X_e$ where $C$ are the edges on any closed loop on the lattice. The 2d cluster state admits a string order parameter and a membrane order parameter, which together distinguish it from a trivial symmetric state. Those two order parameters are given by
\begin{equation}
    \begin{split}
        &\mathcal{M} = \prod_{e \in \partial A} Z_e \prod_{v \in A} X_v = 1\\
        &S = Z_{v_i} \prod_{e \in l} X_e Z_{v_j} = 1
    \end{split}
\end{equation}
where $\partial A$ is a closed loop on the dual lattice enclosing $A$ and $l$ are edges on a string living on the direct lattice whose endpoints are $v_i$ and $v_j$. We put the 2d cluster state on an open cylinder with periodic boundary conditions in the vertical direction and open boundary conditions in the horizontal direction; the open boundary is terminated as shown in Fig.~\ref{fig:2d mbqc}. We keep terms in the Hamiltonian that are fully supported in the bulk. We identify the two 1-dimensional boundaries as logical space $L$ and $R$. Note that the edge qubits at the boundaries are considered part of the bulk. The corresponding logical operators are $\Bar{X}_{L(R)} = \prod_{v \in V_{L(R)}} X_v$ and $\Bar{Z}_{L(R)} = Z_{v_{L(R)}}$, where $V_{L(R)}$ denotes the set of vertex qubits on the $L(R)$ and $v_{L(R)}$ denotes any one of the vertex qubits on $L(R)$. The logical operators can be verified by finding restrictions of the symmetry operators at the boundaries. Another way to check these is as follows:
 After measuring the bulk qubits (red and green qubits in Fig~\ref{fig:2d mbqc}) the two boundaries are entangled such that $\langle \prod_{i\in V_L}X_i \prod_{j\in V_R}X_j \rangle = \prod_{i\in V} m_i$ and for any $Z_i$ and $Z_j$ on the vertices of $L,R$ boundary respectively, $\langle Z_i Z_j\rangle = \prod_{\l \in \ell_{i,j}} m_l$, where $V$ denote the vertices and $\ell_{i,j}$ is a string passing through edges connecting the $i,j$ sites at the opposite boundaries. These can be checked by multiplying appropriate stabilizers of the cluster state. The logical operators are therefore $\prod_{i\in V_L} X_i$ and $Z_i$ for any $i\in V_L$ since they develop long-range correlations and are non-commuting. The logical space is the same as the repetition code which allows us to rotate the boundaries and trace out the non-logical qubits. To decode the logical information, we only need to know $\gamma_V = \prod_{i\in V}=m_i$ and $\gamma_\ell = \prod_{l\in \ell}m_l$ where $\ell$ is a string passing through links connecting the two logical qubits.

\section{Information Transmission using Mixed SPT Orders}\label{sec: Information transmission using Mixed SPT}

In this section, we use the ideas presented above to define mixed SPT states. We start with a pure SPT and put local decoherence. The decoherence can be thought of as environment qubits coming close and interacting with the system's qubits. We take the environment to be initialized in a product state and the system-environment interaction to be local. We denote the environment by $E$ and the system by $S$. As above, $L,R$ denotes the left and the right boundary of the system. 

For mixed states, we study the information about $L$ at two locations, 1) in $E,R$ and measurement outcomes $m$ combined, $I_c(L:ERM)$, or 2) in $R$ and $m$ (without needing access to the $E$ qubits), $I_c(L:RM)$. As we will see in the next section, this classification of mixed-state SPT based on the above information theoretic probes is also related to other mixed-state SPT probes such as strange correlators. Surprisingly, we find a class of channels dubbed as \condition channels that destroy the symmetry in the system but have non-trivial behavior for the above quantities. These channels are outside the scope of any studies on mixed SPTs so far.
In Sec.~\ref{sec: I_c(l:ERM)} we explore the conditions for the measurements not destroying the information. In other words, when can the information be recovered with the access to the environment? We prove Theorem~\ref{thm 1} which gives the sufficient condition for such. We also introduce various types of decoherence channels. In Sec.~\ref{sec: I_c(L:RM)} we explore the behavior of the information without access to the environment, $I_c(L:RM)$. We derive the expression in eq.~\eqref{eq: I_c(L:RM) using H(p)} for the coherent information and relate it to the performance of a decoder in learning the symmetry charge of the system. We also present calculations for $I_c(L:RM)$ for various channels.

Before we move our discussion further, we want to review the standard symmetry conditions for a density matrix $\rho$. Given a symmetry group $G$, we say $G$ is a strong symmetry if any group element $g \in G$ acts as
\begin{equation} \label{eq: strong symmetry}
    U_g \rho = e^{i \theta(g)} \rho,
\end{equation}
where $e^{i\theta(g)}$ is a phase factor. We say $G$ is a weak symmetry if any group element $g \in G$ acts on a symmetric state as
\begin{equation} \label{eq: weak symmetry}
    U_g \rho U_g^{\dagger} = \rho.
\end{equation}
We call a quantum channel $\mathcal{E}$ a strongly/weakly symmetric channel if $G$ is a strong/weak symmetry for $\mathcal{E}[\rho]$. We will also consider \condition channels later which are neither strong nor weak symmetric. In the rest of the paper, we always assume $G$ to be the symmetry group or a subgroup of the symmetry group of a pure SPT phase. 

\subsection{Coherent Quantum Information with Access to the Environment, $I_c(L:ERM)$}\label{sec: I_c(l:ERM)}
Let the system be an SPT state protected by on-site symmetry $G=\otimes_{i} G_i$. We can purify the decoherence channel $\mathcal{E}[\cdot]$ that acts on the system $\ket{\psi}\bra{\psi}$ by introducing an environment state $\ket{E}$ and system-environment interaction $U^{SE}$.  The channel can then be represented as
\begin{equation}
    \mathcal{E}[\ket{\psi}\bra{\psi}] = \Tr_{E}[U^{SE}\ket{\psi}\otimes \ket{E}\bra{E} \otimes \bra{\psi}(U^{SE})^{\dagger}].
\end{equation} In the composite pure state of system and environment $\ket{\psi^{SE}} (\equiv U^{SE}\ket{\psi}\otimes \ket{E}$), the original symmetry charge $G$ of $\ket{\psi}$ gets transformed into $G^{SE} \equiv U^{SE}G (U^{SE})^{\dagger}$. {Note that we are abusing the notation by also using $G$ for the set of unitary operators representing the action of symmetry group $G$. That is, $G$ also represents elements from the set $\{U_g | g\in G\}$ where $U_g$ is the representation of $g$.}

The above symmetry conditions for a quantum channel $\mathcal{E}$ can be restated in terms of its purification. If $\mathcal{E}$ is strongly symmetric, then the new symmetry charge can only take the form of $G^{SE} = G \otimes \mathds{1}^E$. For weakly symmetric $\mathcal{E}$ there exists a purification of environment state $\ket{E}$ and interaction $U^{SE}$ such that the new symmetry charge takes the form of $G^{SE} = G \otimes G^{E}$ and $\ket{E}$ is symmetric under $G^{E}$\cite{ma2023average, de2022symmetry}.

In the rest of the paper, we focus on decoherence composed of single-qubit decoherence channel. This condition can be relaxed but we assume them to simplify the discussion. As a result, we can always choose an environment state $\ket{E}$ which is a product state like $\ket{E} = \otimes \ket{e_i}$. Accordingly, the system-environment interaction $U^{SE}$ can be chosen as the tensor product of local uniary gates, i.e. $U^{SE} = \otimes U^{SE}_i$. Each local symmetry charge $G_i$ then gets transformed into $G^{SE}_i = U^{SE}_i G_i (U^{SE}_i)^{\dagger}$.  We always want to measure the system qubits in eigenbasis $G_i$. Without decoherence, local symmetry charge measurements of the system can transmit the quantum information as discussed in Sec.~\ref{sec: MBQC pure}. After interaction with the environment, we can still transmit coherent information by measuring $G^{SE}_i$ in the bulk. 
However, to call the mixed state an SPT state we still want to measure $G_i$ (or any other symmetric operator) on the system otherwise away from the fixed point the measurement operator on the system might change. Such a protocol to characterize the SPT phase will then be ill-defined since we could also perform measurements on special non-SPT states and entangle the boundaries~\cite{Popp_2005}. This restriction on the measurement protocol leads to the following condition for the measurements to not destroy the information, $G^{SE}_i P_{m_i} = O^{(m)}_i$, where $P_{m_i}$ is a projection operator to an eigenstate of $G_i$ and $O^{(m)}_i$ is some operator supported on the environment alone and depends on the measurement outcome $m$. 
Another way to write the above condition is $G^{SE}_i=\sum_m O^{(m)}_i P_{m}$. In other words, $G^{SE}_i$ have product eigenstates $|o^{(m)}_i\rangle\otimes \ket{m_i}$ where $|o^{(m)}_i\rangle$ is an eigenstate of $O_{i}^{(m)}$, so that measuring $G_i$ is not incompatible with measurement of the true symmetry charge $G_i^{SE}$. We call this condition the \textit{\condition} condition. It implies that the value of $G^{SE}_i$ can be determined by first measuring $G_i$ on the system followed by measuring $O^{(m)}_i$ on the environment (the measurement of the environment depends on the measurement outcome of the system).

A couple of things to note before we move on. The \condition channels above guarantee transmission of quantum information when the system is measured in the local symmetry charge basis (as we will show in the theorem below). Here we assume that measurements on the system are performed before the measurements on the environment. In general, this sequence is important as the measurement basis for the environment depends on the measurement outcomes of the system. Also, for SPTs with higher form symmetries, $I_c(L:ERM)$ may be positive even for weak non-symmetry-decoupling channels and show a transition to zero value as the decoherence strength is increased. We note that requiring $I_c(L:ERM)$ to be positive for any strength of decoherence requires that this decoherence channel obeys the symmetry-decoupling condition.

Let us be more precise now.
Let $m$ be local measurements performed on the system in the symmetry basis. Let us denote the set of all possible measurement outcomes by $M$. Let $I_c^{(0)}(L:RM)$ be the coherent information without decoherence. We then have the following theorem.

\begin{theorem} \label{thm 1}
Let new symmetry operator post-decoherence supported on the system and the environment be of the form $G^{SE} = \sum_{m\in M} O_E^{(m)} P_{m}$, where $P_{m}$ is the projection to measurement outcome $m$ in the bulk and $O_E^{(m)}$ are operators on the environment. The coherent information post-decoherence is then the same as that without decoherence, $I_c(L:ERM)=I_c^{(0)}(L:RM)$, where $I_c^{(0)}(L:RM)$ is the coherent information without decoherence.
\end{theorem}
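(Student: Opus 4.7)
The plan is to use the \condition condition $G^{SE}_i = \sum_m P_m O_E^{(m)}$ to reduce the decohered protocol with environment access to the pure-state protocol. The key algebraic observation is $[G_i,G^{SE}_i]=0$, which licenses a joint simultaneous measurement: after the bulk measurement of $G_i$ with outcome $m_i$, one may additionally measure the environment-supported operator $O_i^{(m_i)}$, obtaining an outcome $o_i$ equal to the eigenvalue of the transformed charge $G^{SE}_i$. This augments the classical output with a register $O$, and by data processing $I_c(L:RMO)\leq I_c(L:ERM)$.

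The technical heart of the argument is computing the joint post-measurement state for outcomes $(\bmm,\mathbf{o})$. Expanding $\ket{\psi}=\sum_{\bmm'} c_{\bmm'}\ket{\bmm'}_B\ket{\phi^{(\bmm')}_{LR}}$ and noting that $U^{SE}_i(\ket{m'_i}\ket{E_i})$ lies in the $m'_i$-eigenspace of $G^{SE}_i$, which under \condition is spanned by states of the form $\ket{m_i}_S\otimes\ket{o^{(m_i)}_{m'_i}}_E$, I would show that conditioning on $\mathbf{o}$ collapses the $\bmm'$-sum to its single term $\bmm'=\mathbf{o}$. The resulting state factorizes as $\ket{\bmm}_B\otimes\bigotimes_i\ket{o^{(m_i)}_{o_i}}_{E_i}\otimes\ket{\phi^{(\mathbf{o})}_{LR}}$, whose $LR$ factor is exactly the pure-case state $\ket{\phi^{(\mathbf{o})}_{LR}}$, and whose marginal distribution of $\mathbf{o}$ reproduces the pure-case probability $|c_{\mathbf{o}}|^2$.

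With this factorization in hand, a short direct calculation gives $I_c(L:RMO)=I_c^{(0)}(L:RM)$: the classical $M$ register is conditionally independent of $L,R$ given $\mathbf{o}$, so its entropy contribution cancels between $S(\rho_{RMO})$ and $S(\rho_{LRMO})$. Combined with the data-processing bound from the first paragraph, this yields $I_c(L:ERM)\geq I_c^{(0)}(L:RM)$. The matching upper bound follows from the saturation property of the fixed-point SPT, $I_c^{(0)}(L:RM)=\log d_L$, which is the largest coherent information any channel from $L$ can deliver; hence $I_c(L:ERM)\leq \log d_L = I_c^{(0)}(L:RM)$, closing the equality. The main obstacle is the factorization step itself, which demands careful bookkeeping of the branches of $U^{SE}_i$ and the orthogonality of $\{\ket{o^{(m_i)}_{o_i}}\}_{o_i}$ for fixed $m_i$; to avoid appealing to saturation one can alternatively establish the mixture representation $\rho_L^{(\bmm),\mathrm{decoh}}=\sum_{\mathbf{o}} p(\mathbf{o}\mid\bmm)\,\rho_L^{(\mathbf{o}),\mathrm{pure}}$ directly from the factorization and use that each $\rho_L^{(\mathbf{o}),\mathrm{pure}}$ is maximally mixed on the logical subspace, so the entropy is invariant under the convex mixture and the identity holds term by term.
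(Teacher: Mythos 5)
Your proof follows the same route as the paper's at its core: the factorization of the post-measurement state, conditioned on the system outcomes $\bmm$ and the environment outcomes $\mathbf{o}$, into the pure-case $LR$ state labelled by $\mathbf{o}$ tensored with junk is precisely the paper's central step, which it phrases via the projector identities $P^E_e P_m = P_m P^{SE}_e$ and $P^{SE}_e U = U P_e$, yielding $\rho_{LR}^{(m,e)} = \rho_{LR}^{(e)}\otimes\rho_{\mathrm{junk}}$. Where you diverge is in the final accounting, and there you smuggle in a hypothesis the theorem does not contain. The paper's $I_c(L:ERM)$ is defined operationally with the environment \emph{measured} in the outcome-dependent basis $O^{(m)}$ --- it is exactly your $I_c(L:RMO)$ --- so your factorization plus the identity $I_c(L:RMO)=I_c^{(0)}(L:RM)$ already closes the proof, and no separate upper bound is required. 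Your data-processing-plus-saturation detour targets the stronger claim that even the unmeasured quantum environment cannot beat $I_c^{(0)}(L:RM)$; for that claim the assumption $I_c^{(0)}(L:RM)=\log d_L$ is genuinely load-bearing rather than a convenience. Without maximality the bound can fail: for a channel that swaps a bulk qubit into the environment, quantum access to $E$ recovers entanglement that the sharp pure-state measurement would have destroyed, so the quantum-environment coherent information can strictly exceed a non-maximal $I_c^{(0)}(L:RM)$. Your fallback argument (term-by-term entropy invariance of the convex mixture) leans on the same maximal-mixedness property and so does not remove the assumption. The clean fix is to adopt the paper's operational reading of $I_c(L:ERM)$ and stop after the factorization, which is where the paper stops.
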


\begin{proof}
    Let $\rho_0 = \ket{\psi}\bra{\psi}$ be the pure SPT state. Let $\rho^{(m)} \propto P_m \rho_0 P_m = \rho_{LR}^{(m)} \otimes \ket{m}\bra{m}$ be the post-measurement state of the boundaries. $P_m$ is projection on the system to outcome $m$. $I_c^{(0)}(L:RM)$ is given by the average of von-Neumann entanglement entropy between $LR$, $I_c^{(0)}(L:RM)= \sum_m p_m S_R(\rho_{LR}^{(m)})$ (see Appendix~\ref{app: i_c}).

    Post-decoherence, in addition to the system, measurements on the environment are also considered. In particular, the environment is measured in basis $O^{(m)}_E$ which depends on the measurement outcome $m$ on the system. 
    Let us denote the environment measurements by $e$ and the set of outcomes by $M_E$. The post-measurement state of the boundaries is given by $\rho_{LR}^{(m,e)} = P^E_e P_m U\rho_0 \rho_0^EU^\dagger P_m P^E_e$, where $U$ is the interaction between $E,S$ and $\rho_0^E$ is the initial state of the environment. 
    $P_e^E$ is the projection of environment to eigenvalue of $O^{(m)}_E$. Note that the projectors without any superscript denotes projection on the system and those with superscript $E$ and $SE$ are projectors on the environment and the union of system and environment respectively.
    The \condition condition for the symmetry $G^{SE}$ implies $P^E_e P_m = P_m P^{SE}_e$ where $P^{SE}_e$ is projection to eigenstates of $G^{SE}$ with eigenvalue $e$. Using $G^{SE} = UGU^\dagger$ we also have $P_e^{SE}U = UP_e$ and the post-measurement state is $P_m U P_e\rho_0 \rho^E_0P_e U^\dagger P_m = P_m U \left(\rho_{LR}^{(e)} \otimes \ket{e}\bra{e}\right) \rho_0^E U^\dagger P_m = \rho_{LR}^{(e)}\otimes \rho_{\mathrm{junk}}$, where $\rho_{\mathrm{junk}} = P_m U \left(\ket{e}\bra{e}\rho_0^E\right) U^\dagger P_m$ (recall that we assume that $U$ doesn't act on the boundaries). Since the reduced density matrix on $LR$ post-measurement is the same as that without the decoherence, the coherent information is unchanged, $I_c(L:ERM)=I_c^{(0)}(L:RM)$. This completes the proof.
    
\end{proof}

After measurements on $E,S$ the state on the boundaries and the measurement register is \begin{align}
    \rho_{L,R,M,M_E} = \sum_{m\in M} p_m \sum_{e\in M_E} p_{e|m} \rho_{L,R}^{(m,e)} \otimes \ket{m,e}\bra{m,e},
\end{align}
where $p_m,p_{e|m}$ are probability to observe outcomes $m$ on system and $e$ on environment given that system had outcomes $m$ respectively.
The coherent information in $E,R,M$ is given by,\begin{align}
    I_c(L:ERM) &= \sum_m p_m \sum_e p_{e|m} \left(S(\rho_R^{(m,e)}) - S(\rho_{LR}^{(m,e)})\right) \nonumber \\
    &= \sum_m p_m \sum_e p_{e|m} S(\rho_R^{(m,e)}),
\end{align}
and is be equal to the information before the decoherence as the reduced density matrix on the boundaries is unchanged.

Following the discussion around eq.~\eqref{eq: full decomposition}, we don't need full knowledge of the measurement outcomes to transmit the logical space and only the equivalency class $\gamma$ of the post-measurement logical state is required. We can throw away the bulk measurements on $E$ except for the symmetric charge $\gamma_e=\prod_{i} e_i$ and trace/measure out the non-logical space on the boundaries.

We also note that though the above formalism though seems to depend explicitly on how the environment is chosen, coherent information does not depend on such choice. Using Stinespring theorem a minimal p We can apply any unitary rotation on the environment but this only changes the new symmetry charge $G^{SE}$ or more precisely, the operators $O_i^{(m)}$. The crucial point is that there always exists a measurement basis in the environment, which might be highly non-local, such that performing measurements on such a basis can transmit the information across the bulk. Another important point is that the measurement basis on the system is fixed, and it is the measurements in this symmetric basis that protect the information transmission away from the fixed point. This makes the above probe well-defined for the mixed SPT phase. We will come back to this point in detail when we perturb away from the fixed point in Sec.~\ref{sec: away from fixed point}.

We now discuss various examples of decoherence channels for the 1d cluster state to demonstrate the above ideas.
The 1d cluster state has $Z_2 \times Z_2$ symmetry generated by $\prod_{i \in A}X_i$ and $\prod_{i\in B} X_i$ where $A,B$ are two sublattices bi-partitioning the lattice. We can apply decoherence to both sublattices independently with different strengths, $p_a,p_b$. We now discuss a few types of decoherences.

\paragraph{\textbf{Z dephasing}}
A local Z-dephasing channel acts on the density matrix as\begin{align*}
    \mathcal{N}^Z_i[\rho] = (1-p)\rho + pZ_i\rho Z_i,
\end{align*}
where $p$ is the strength of the dephasing. We consider the channel has different strength $p_a$ and $p_b$ when acting on sublattices $A$ and $B$. The channel can be purified by introducing an environment initialized in a product state of $\ket{0}$s and the system-environment interaction is given by $U=\mathrm{CZ.Rx}(\theta)$ where CZ is the controlled-Z gate and Rx$(\theta)$ is rotation around the x-axis by angle $\theta$. The strength of the dephasing is given by $p=\sin^2 \theta$ and $\theta\in [0,\pi/4]$. The dephasing channel does not act on the system boundary qubits. The new symmetry charges after the interaction are $\prod_{i\in A(B)}Z^E_i X_i$. Notice that the new symmetry charge is independent of the dephasing strengths $p_a$ and $p_b$.

Since the new symmetry is of the form required by Theorem~\ref{thm 1}, the coherent information $I_c(L:ERM)=1$ since the pure cluster state can transmit information perfectly, see eq.~\eqref{eq: 1d cluster state decoding}.
Performing $X_i$ measurements in the bulk of the system followed by $Z^E_i$ measurements in the environment allows the symmetry charge to be learned and is enough to transmit the information following discussion around eq.~\eqref{eq: 1d cluster state decoding}, the information transmitted after post-decoherence with the help of system and environment is equal to that pre-decoherence.

The same analysis and conclusion also hold for Y-dephasing.
\newline
\paragraph{\textbf{SWAP channel}}
The SWAP channel is defined as a SWAP gate acting between system qubits and the environment qubits initialized in the product state of eigenstates of X operator $\ket{+}$ state. The Krauss operator representation is given by \begin{align}
    \mathcal{N}^{\rm Sw}_i[\rho]  = P_{i,+} \rho P_{i,+} + Z_iP_{i,-} \rho P_{i,-} Z_i,
\end{align}
where $P_{i,\pm}$ are projection to $\pm$ eigenstates of $X_i$. Clearly since the environment has access to the original qubits from the system\begin{align}
    I_c(L:ERM)=1. \label{eq: Ic(ERM) swap channel}
\end{align}
This can also be argued using Theorem~\ref{thm 1} as the new symmetry charge is $\otimes_i X_i^E$ and satisfy the \condition condition.
\newline

\paragraph{\textbf{Controlled Hadamard}}
The controlled Hadamard gate is defined by initializing the environment in $\ket{0}$ state and applying interaction $U=\mathrm{CH.Rx}(\theta)$ where CH is controlled Hadamard gate with control from the environment. The new symmetry charge on site $i$ is given by $UX_i U^\dagger \propto {Z^E_i Z_i H_i Z_i + H_i}$. Let the decoherence act only on one sublattice. Measuring the system in X-basis would destroy $UX_i U^\dagger$ as they don't commute. Thus we expect \begin{align}
    I_c(L:ERM) = 0.\label{eq: Ic(ERM) for controlled H}
\end{align}
For decoherence acting on both sublattices the coherent information $-1$.
\newline

\paragraph{\textbf{\Condition channel}} \label{non weak-symmetric paragraph}
Let us consider a channel where the
new local symmetry charge under the generic \condition channel is given by $G^{SE}_i = \sin\theta X_i X^{E}_i + \cos\theta I Z^E_i$, where $X^E,Z^E$ acts on the environment. This channel, for example, can be purified by introducing the interaction \begin{align}
U^{SE}_i=CNOT\cdot e^{i\theta Y^E_i}\cdot SWAP \label{eq: SDC U}
\end{align}
where the CNOT is controlled by the environment qubit, and an environment state $\ket{e_i} = \cos\phi \ket{0^E} + \sin\phi \ket{1^E}$. 

In general, the channel defined via the above interaction is not weakly symmetric except at special points, $\cos\theta=0$ or $\tan \phi=\pm 1$; the latter is true when the environment starts in the eigenstate of $X_i^E$. The proof can be found in Appendix~\ref{app: SDC under symmetry}.

Using Theorem~\ref{thm 1} the coherent information $I_c(L:ERM)=1$. We will later see that the coherent information $I_c(L:ERM)$ is closely connected to the mixed-state strange correlator and density matrix under \condition channel has a non-zero value for the strange correlator despite being a non-weak-symmetric channel. This is an interesting result as so far the studies of mixed-state SPT were restricted to weak-symmetric channels and the above channel is an example of a non-weak-symmetric channel with SPT order in the mixed-state.

\subsection{Coherent Information without access to the environment, $I_c(L:RM)$}\label{sec: I_c(L:RM)}

Here we ask about the amount of coherent information transmitted to $R$ without having access to the environment. The new symmetry charge $G^{SE}$ cannot be deterministically known if we don't have access to the $E$ qubits. The best one can do in this case is to ``guess'' or \textit{decode} the values of $\gamma_e$ using the measured outcomes. The coherent information stored in the right boundary $R$ and the measurement outcomes $M$, $I_c(L:RM)$, is given by,\begin{align}
    I_c(L:RM) = \sum_m p_m S(\rho_R^{(m)}) - S(\rho_{LR}^{(m)}),
\end{align}
where \begin{align}
    \rho_{LRM} &= \sum_{m} p_m \rho_{LR}^{(m)} \ket{m}\bra{m}, \\
    \rho_{LR}^{(m)} &= \sum_{\gamma\in \Gamma_e}p_{\gamma|m} \sum_{e}p_{e|\gamma,m} \rho_{LR}^{(m,e,\gamma)} \nonumber \\
    &\equiv \sum_{\gamma\in \Gamma_e} p_{\gamma|m}\ket{\Phi_{\gamma,m}}\bra{\Phi_{\gamma,m}} \sum_{e} p_{e|\gamma,m} \rho_{LR,\mathrm{rest}}^{(m,e,\gamma)},
\end{align}
where we have split the boundary space into logical and non-logical degrees of freedom; the logical part only depends on $\gamma$. See discussion around eq.~\eqref{eq: logical space decomposition},~\eqref{eq: full decomposition}.
We measure or trace out the non-logical degree of freedom and have \begin{align}
    I_c(L:RM) &= \sum_m p_m S\left(\sum_{\gamma}p_{\gamma|m} \mathrm{Tr}_L \ket{\Phi_{\gamma,m}}\bra{\Phi_{\gamma,m}}\right) - \nonumber\\
    &\ \ \ \ - \sum_m p_m S\left(\sum_{\gamma}p_{\gamma|m} \ket{\Phi_{\gamma,m}}\bra{\Phi_{\gamma,m}}\right) \nonumber \\
    &\geq \sum_{m,\gamma\in \Gamma_e} p_{m,\gamma}S(\rho_R^{(m,\gamma)}) - \sum_{m}p_m H(p_{\gamma|m}),\nonumber\\
    &= I_c(L:E RM) - \sum_m p_m H(p_{\gamma|m}),
\end{align}
where $H(p_{\gamma|m})$ is the Shannon entropy of the distribution $p_{\gamma|m}$. In the above expressions we used the inequality \[\sum_ip_i S(\rho_i)\leq S(\sum_i p_i \rho_i)\leq \sum_ip_i S(\rho_i) + H(p_i)\] to go from line 2 to 3. If $H(p_{\gamma|m})=0$ then $I_c(L:RM) = I_c(L:E RM)$ (since $I_c(L:RM)\leq I_c(L:ERM)$ by data processing inequality~\cite{Schumacher_1996}). In other words, the information can be transmitted without access to the environment qubits if the symmetry charge leaked into the environment can be perfectly learned from the measurement outcomes $M$ on the system. 

Also interestingly, the inequality in the above expression is saturated when $\ket{\Phi_{\gamma,m}}$ are orthogonal for different $\gamma$ and $I_c(L:ERM)$ is maximum. That is, if $\braket{\Phi_{\gamma_1,m}}{\Phi_{\gamma_2,m}} = \delta_{\gamma_1,\gamma_2}$ and $\mathrm{Tr}_L \ket{\Phi_{\gamma,m}}\bra{\Phi_{\gamma,m}}\propto \mathbb{I}$ then \begin{align}
    I_c(L:RM)=I_c(L:ERM)-H(p_{\gamma|m}).  \label{eq: I_c(L:RM) using H(p)}
\end{align}
The above equation is one of key results of the paper. We expect the conditions for the above equation to be true for computationally complete resource states such as cluster states. The above equation thus provides an important information-theoretic identity for the computational use of mixed-state SPTs.

We now work out a few examples for cluster states in various dimensions.
\begin{figure}
    \centering
\includegraphics[width=\linewidth]{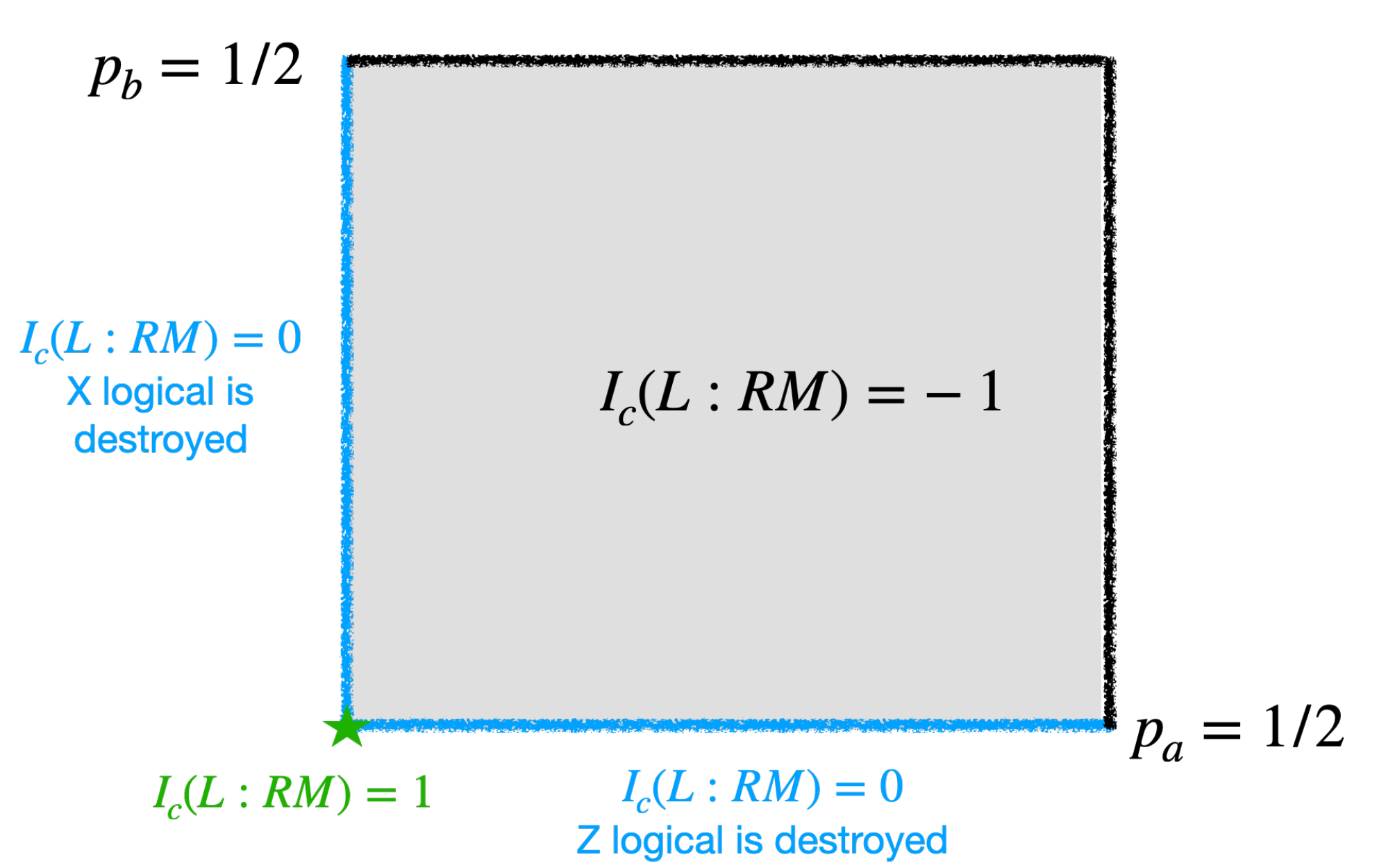}
    \caption{Behavior of $I_c(L:RM)$ of 1d cluster state under decoherence. The two axes are the strength of the decoherence on $a$ and $b$ sublattice. For blue regions only classical information is transmitted. When decoherence acts on both sublattices no information is transmitted (grey region).}
    \label{fig:1d cluster state phase diagram}
\end{figure}

\subsubsection{Cluster State in $d=1$}
\paragraph{\textbf{Z-dephasing}}
For 1d cluster state, $\ket{\Phi_\gamma}$ are maximally entangled orthogonal states and thus $I_c(L:RM)=I_c(L:ERM)-\sum_m p_m H(p_{\gamma|m}) = 1-\sum_m p_m H(p_{\gamma|m})$. For 1d cluster state $\gamma=\prod_{i}Z^E_i\otimes X_i$ where $\gamma$ consist of two values $\gamma_0,\gamma_1$ defined for odd and even sublattices. Let us assume that the decoherence only acts on even sublattice for now, that is, $p_b=0$. Thus we need to estimate $\prod_{i} Z_i^E$ on even sublattice using measurement outcomes of $X_i$ on the system. 
The optimal strategy is to guess $Z_E^i$ using maximum likelihood given the value of $\langle Z_E^i\rangle = 1-2p_a $, i.e the probability of having $Z_i^E=1$ is $p_0 = \left(\langle Z_E^i\rangle + 1\right)/2=1-p_a$. 
Using the above maximum likelihood probability, the probability of $\prod_i Z_i^E = 1$ is $p_{\gamma=1|m}=\tfrac{1 + (1-2p_a)^N}{2}$ and $p_{\gamma=-1|m}=\tfrac{1 - (1-2p_a)^N}{2}$, where $N$ is the number of qubits getting decohered. 
The entropy of this distribution is \[\sum_m p_m H(p_{\gamma|m})\approx 1-(1-2p_a)^{2N}/(2\ln 2)\] and \begin{align}I_c(L : RM) =(1-2p)^{2N}/(2\ln 2)=e^{-N/\xi}\label{eq: Ic(RM) Z dephaing one sublattice}\end{align} with $\xi=\tfrac{-1}{2\ln (1-2p)}$.

If both the sublattices are  decohered then \[\sum_m p_m H(p_{\gamma|m})\approx 2-\left((1-2p_a)^{2N} + (1-2p_b)^{2N} \right)/(2\ln 2)\] and \begin{align}
I_c(L : RM) =-1 + \left((1-2p_a)^{2N} + (1-2p_b)^{2N} \right)/(2\ln 2). \label{eq: Ic(RM) Z dephasing}
\end{align}
Fig.~\ref{fig:1d cluster state phase diagram} shows the phase diagram of the cluster state under decoherence.

Another way to arrive at the above result is to calculate $I_c(L:RM)$ using \begin{align}
    I_c(L:RM) = \sum_m p_m I^{(m)}(L:R) - I_c(L:ERM),
\end{align}
where $p_m$ is probability of observing outcomes $m$ and $I^{(m)}(L:R)$ is mutual information between $L,R$ for outcomes $m$. We will come back to this in Sec.~\ref{sec: connection to type I SC} where we will connect the behavior of $I_c(L:RM)$ with type-I strange correlators.

\paragraph{\textbf{SWAP channel}}
Under the SWAP channel, the system qubits are taken by the environment. If only one sublattice qubits are swapped then we can still determine $\gamma$ on the other sublattice $H(p_{\gamma|m})=1$ and thus $I_c(L:RM) = 0$. But if qubits from both sublattices are swapped then $H(p_{\gamma|m})=2$ and $I_c(L:RM)=0$.
\begin{figure}
    \centering
    \includegraphics[width=\linewidth]{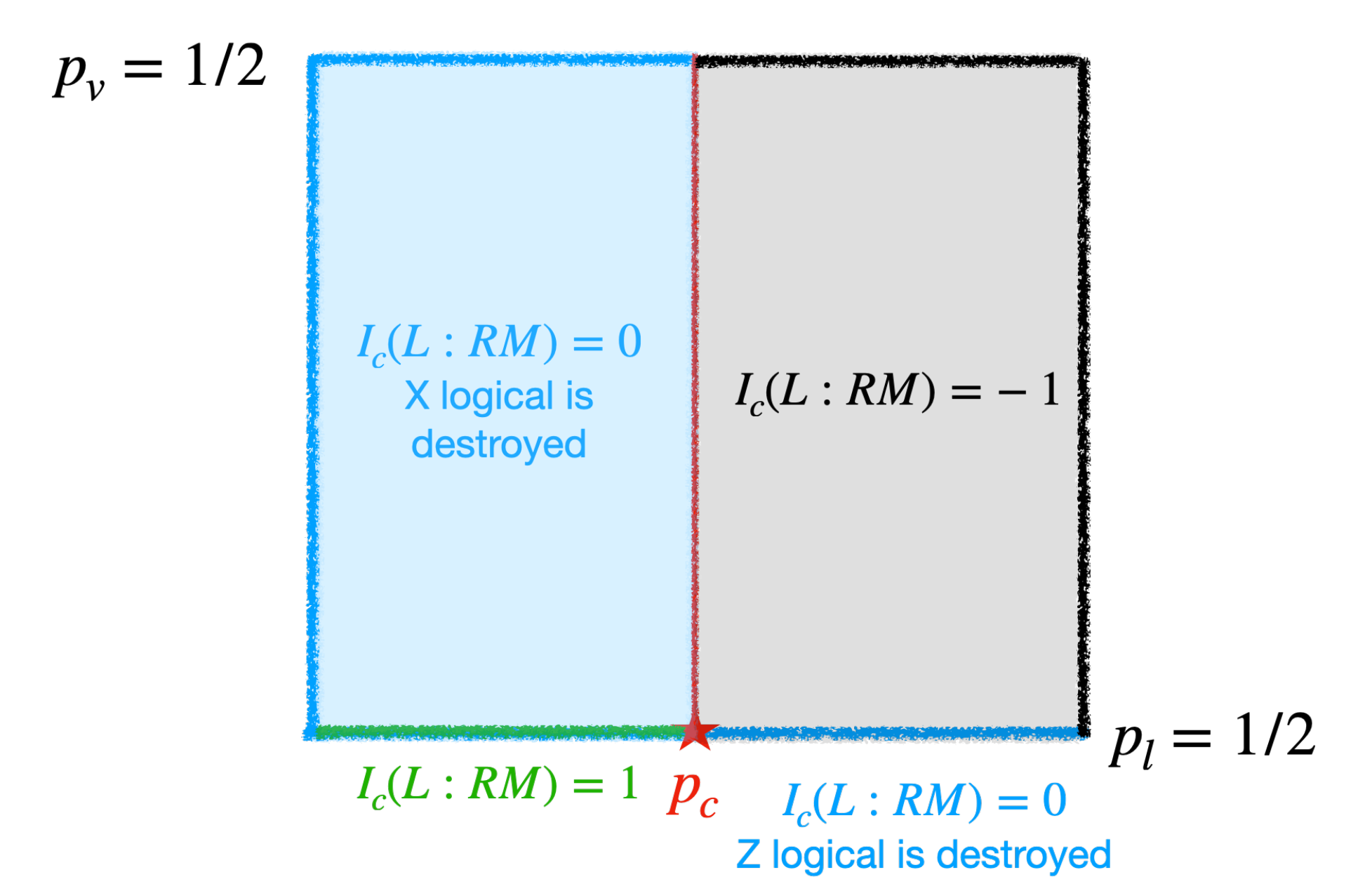}
    \caption{Behavior of $I_c(L:RM)$ of 2d cluster state under decoherence. $p_l$, $p_v$ are the strengths of the decoherence on link qubits and vertex qubits respectively. In the green region, the full logical information is transmitted, while only classical information is transmitted in the blue regions. The grey region denotes no information is transmitted and is completely lost to the environment. For $p_v=0$, $I_c(L:RM)$ sees a transition for \condition channels acting on links.}
    \label{fig: 2d cluster state phase diagram}
\end{figure}
\newline

\paragraph{\textbf{\Condition channel}} \label{sec: type I I_c non-weak-sym}
The calculation for generic \condition channel proceeds in a similar manner as for the Z-dephasing, except that instead of guessing the value of $Z_i^E$ we guess $O^{(m)}_i$. We can show that $\langle O^{(m)}_i \rangle = 0$ and using the maximum likelihood guess for $\gamma$, $H(p_{\gamma|m})=1$ (decoherence acts only on one sub-lattice) and $I_c(L:RM)=0$.

\subsubsection{Cluster State in $d=2$}
 \paragraph{\textbf{Z-dephasing}}
 As discussed in Sec.~\ref{sec: examples pure case}, for information transmission the values of $\gamma_V=\prod_{i\in V}m_i$ and $\gamma_l = \prod_{i\in \ell_H} m_i$ are needed where $m_i$ are measurement outcomes of $X_i$ in the bulk, and $\ell_H$ is a string passing through the links and ending on the two boundaries $L,R$. After interaction with the environment, the new on-site $Z_2^{(1)}$ symmetry charge is $UX_i U^\dagger = Z^E_i X_i$. In the absence of any decoherence on the vertices, $p_v=0$, $H(p_{\gamma_v|m})=0$ as all measurement outcomes on the vertices are trustful and we have $\gamma_V=\prod_{i\in V}m_i$. For decoherence on edges with strength $p_l$, the measurement outcomes on edges can be faulty and we want to guess/decode the true symmetry charge. By $Z_2^{(1)}$ symmetry, for any plaquette $\prod_{i \in \square} Z^E_i X_i = 1$ and therefore $\prod_{i\in \square} Z^E_i = \prod_{i\in \square} m_i$. We can think of the plaquette with $\prod_{i\in \square}m_i = -1$ as having an error syndrome, that is, one (or odd number) of the qubits in the plaquette has a faulty measurement outcome. We want to identify and remove these errors by applying strings operator $\prod_{i\in C} X^E_i$, where $C$ are open strings connecting the erroneous plaquettes and flipping the measurement outcomes $m_i$ along the string. The probability of an error occurring at site $i$ is equal to the probability of having $Z^E_i=-1$ and is equal to the decoherence strength $p_l$. Also not that, adding a contractible close loop of errors doesn't change the answer as the symmetry string will pass an even number of times through the error loop. The decoder thus need not identify the exact error string but a path equivalent to the original error string up to close loops of errors. Given the error syndromes, thus there are two equivalency classes for doing these matchings as shown below.\begin{align*}
    \centering
    \includegraphics[width=1\linewidth]{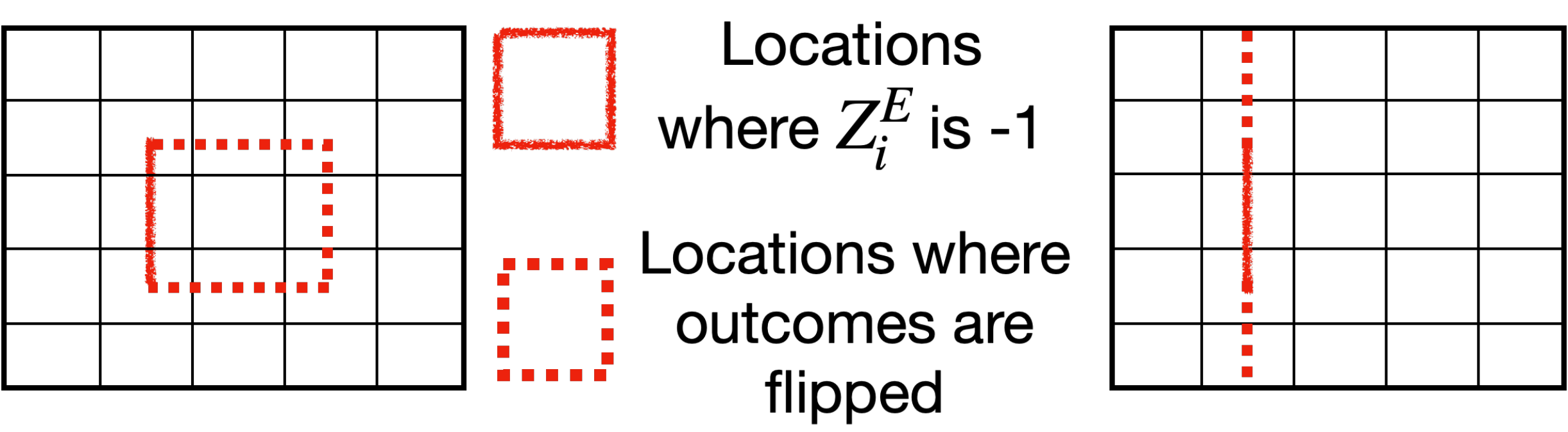}
    \label{fig:2d cluster state decoding}
\end{align*} 
A matching $C$ that does not create a non-contractible loop of $Z^E_i=-1$ (left figure) would then give the current value of $\gamma_l$ by $\prod_{i\in l} s_i m_i$, where $s_i=-1$ if $i\in C$ and is otherwise equal to $1$. However, if the probability of a non-contractible loop (right figure) is non-zero after matching of the errors, the value of $\gamma_l$ would be wrong and unreliable.
We also assume that there is no decoherence at the edge qubits on the boundaries otherwise there might be non-contractible loops of $\mathcal{O}(1)$ size starting and ending at the boundary. The above decoding problem is the same as the decoding problem for the Toric code under X-dephasing of strength $p_l$ and is known to have a finite error threshold~\cite{dennis2002topological} at $p_l\approx 0.109$. Moreover, the error threshold transition is known to lie in the random bond Ising model (RBIM) universality class along the Nishimori line~\cite{nishimori1981internal}. In conclusion, $H(p_{\gamma_l|m}) = 0$ for $p_l<p_c\approx 0.109$ and equal to $1$ for $p_l>p_c$ where $p_c$ is the critical point of 2d RBIM. And thus, $I_c(L:RM)$ behaves as \begin{align}
    I_c(L:RM) = \begin{cases}
                1 & p_l<p_c\\
                0 & p_l>p_c
                \end{cases}.
 \end{align}
The zero value of the coherent information is related to the fact that $\bar{X}$ logical operator is transmitted but the $Z$ logical is destroyed.

{On relaxing the assumption of no decoherence at the boundary edge qubits, there is a probability for a short non-contractible loop at the boundary. In this case, we expect $I_c(L:RM)<1$ but non-zero for $p_l<p_c$. We will study a similar scenario in Sec.~\ref{sec: connection to type I SC} where we will consider information transmission in SPT between any two points in the bulk.}

When $p_v$ is also non-zero then $H(p_{\gamma_v|m}) = 1 - \mathrm{exp}(O(N_V))$ where $N_V$ are the number of qubits on vertices getting decohered. This can be seen using the same line of reasoning as for the 1d cluster state above. In this case then \begin{align}
    I_c(L:RM) = \begin{cases}
                0 & p_l<p_c\\
                -1 & p_l>p_c
                \end{cases}.
 \end{align}
 We show the phase diagram for the 2d cluster state under decoherence in Fig.~\ref{fig: 2d cluster state phase diagram}.

\paragraph{\textbf{\Condition channel}}\label{sec: SDC calculation}

To get non-trivial behavior under \condition channels we need to consider channels smoothly connected to the identity channel. We therefore modify the channel such that with probability $1-q$ no decoherence occurs at the qubit and with probability $q$ the \condition channel defined in eq.~\eqref{eq: SDC U} acts. Then we show below that $I_c(L:RM)=1$ for low enough values of $q$.

Let us be more precise. For each decohered system qubit, we introduce two environment qubits $E_{1,2}$. The qubit $E_2$ is initialized in $\ket{e_2}=\sqrt{1-q}\ket{0}_2+\sqrt{q}\ket{1}_2$. We apply the following gate to the three qubits,\begin{align*}
    \mathbf{U}\ket{\psi}\ket{e_1}\ket{e_2} = \sqrt{1-q}\ket{\psi e_1}\ket{0}_2 + \sqrt{q}U \ket{\psi e_1}\ket{1}_2,
\end{align*}
where $U$ is the unitary introduced in eq.~\eqref{eq: SDC U} as \condition channel. The 2nd environment applies a control-$U$ gate on the system qubit and the 1st environment qubit. 

As discussed above, we want to know the symmetric charge string $\prod_{i\in \ell_h} X_i$ where $\ell_h$ is a string running from a qubit on $L$ to another qubit on $R$. Due to errors, the system outcomes would not give the right value for the above string. We define an outcome $m_i$ at bond $i$ to be erroneous if the actual symmetric charge at $i$ is $-m_i$. The probability of a bond being erroneous is given by $ \mathrm{Tr} \frac{1-m_iX'_i}{2} P_{m_i} \rho P_{m_i} \frac{1-m_iX'_i}{2}$, where $X'$ is the modified symmetry charge supported on the system and the environment qubit. The probability (up to normalization) for obtaining measurement outcomes $m$ and having erroneous qubits $\epsilon$ is given by,\begin{align}
    P(m,\epsilon) \sim &  \prod_{i\in \epsilon} q_i \prod_{i \notin \epsilon} (1-q + q_i)\ \  \times  \label{eq: disorder distribution main text}\\
    & \times \sum_{C} \prod_{i\in C} m_i {\epsilon_i},  \nonumber
\end{align}
where $C$ is a close loop on the lattice, $q_i =q \left(\frac{1+rm_i}{2}\right) $, $r=\langle X^E \rangle$ is the average value of the environment's qubits initial state, and $\epsilon_i = -1$ if $i$ is an erroneous qubit and $+1$ otherwise. Here we abuse the notation by denoting the set of erroneous qubits by $\epsilon$ and also $\epsilon_i = \pm 1$ are variables defined on the erroneous set; the use of the notation should be clear from the context.

From the probability distribution in eq.~\eqref{eq: disorder distribution main text} one can define a maximum likelihood decoder and write a stat mech model. Deferring the technical details to Appendix~\ref{app: decoder}, the decoder's behavior is determined by a disordered stat-mech model with partition function given by,\begin{align} 
    Z(m,\overline{\epsilon}) \sim \left(\sum_{\sigma} e^{ \sum_{ab} \beta J_{ab} \epsilon_{ab} \tau_a\tau_b + hm_{ab}}\right), \label{eq: stat mech model}
\end{align}
where $\tau_a$ are Ising spins living on the plaquettes of the original lattice. We have \begin{align*}
    e^{\beta J_{ab} + hm_{ab}} &\propto 1-q+q_{ab} \\
    e^{-2\beta J_{ab}} &= q_{ab}/(1-q+q_{ab}).
\end{align*}
Note that $q_{ab}$ and hence $J_{ab}$ depends on $m_{ab}$. The disorder distribution for $m_{ab}$ and $\epsilon_{ab}$ is given by eq.~\eqref{eq: disorder distribution main text}. {We are not able to exactly solve the model but we believe that model has an order-disorder transition as the value of $q$ is increased. This would correspond to having a transition in $I_c(L:RM)$ as the strength of the decoherence is increased. {One way to argue for a transition is to note that in the limit $\langle X_i^E\rangle = 0$ the stat mech model reduces to RBIM which is known to have order-disorder transition.} The universality class of the transition is however expected to be different than the RBIM transition in general. We also believe that not all weak symmetric channels will have RBIM transition and thus there is no connection between having weak-symmetry and having RBIM transition.}
    
\section{Information Transmission and the Mixed-State Strange Correlator}\label{sec: connection to SC}

In this section we make connections between the quantum communication property of mixed SPTs probed through coherent information $I_c(L:ERM)$ and $I_c(L:RM)$ and type-II and type-I strange correlators~\cite{lee2023symmetry,zhang2022strange} respectively. We show that positive $I_c(L:ERM)$ implies a non-zero type-II strange correlator and non-zero type-I strange correlator indicates transmission of non-trivial quantum information using the mixed state SPT, that is $I_c(L:RM)>0$. Strange correlators had been proposed as a many-body mixed state probe for mixed and average SPT based on order in double Hilbert space. Such probes usually lack practical and operational meaning. The results below provide an operational meaning to such probes. Moreover, as we have seen above, $I_c(L:ERM)>0$ for \condition channels, which need not be weakly symmetric. The results in this section then surprisingly imply that strange correlators are also non-zero for such channels.

\subsection{Type II strange correlator} \label{sec: connection to type II SC}
We prove that non-trivial the coherent information $I_c(L:ERM)$ implies non-trivial type-II strange correlator for mixed SPTs introduced in~\cite{lee2023symmetry}. The mixed state type-II strange correlator $\SC^{\rm{II}}$ is defined as \begin{align}
    \SC_{ij}^{\rm{II}} = \frac{\mathrm{Tr}\left( \rho O_i O_j \rho_0 O_i O_j^\dagger \right)}{\mathrm{Tr}\left( \rho \rho_0 \right)},
\end{align}
where $\rho$ is the density matrix of a decohered SPT state, whose symmetry group is $G$ before decoherence, and $\rho_0$ is a $G$-symmetric product state.
We purify the decoherence channel by defining an environment state $E$ and the system-environment interaction $U$. The composite state of the system and environment is another pure SPT state under the symmetry $G^{SE}=U G U^\dagger$. The type-I strange correlator for this pure SPT state $|\psi_{SE}\rangle$ is given by
\begin{equation}
    \SC_{ij}^{\RN{1}} = \frac{\mathrm{Tr}\left( \langle{\psi_{SE}} | C^{SE}_i{}^\dagger (C^{SE}_j) | {m,e^{(m)}}\rangle\right)}{\mathrm{Tr}\left(  \bra{\psi_{SE}} \ket{m,e^{(m)}}\right)},
\end{equation}
where $\ket{m,e^{(m)}} = \otimes_i |m_i,e^{(m)}_i\rangle$, and $\ket{m_i},|e^{(m)}_i\rangle$ are respectively eigenstates of $G_i$ and $O^{(m)}_i$ (recall by \condition condition $G^{SE}_i=\sum_{m_i} O^{(m)}_i P_{m_i}$, $\ket{m,e^{(m)}}$ is then a $G^{SE}$-symmetric product state), and $C_i^{SE}$ is charged under $G^{SE}$. 
We  write $C_i^{SE} = \sum_m C_i^{(m)}Z_i P_m$, where $Z_i\ket{m} = \ket{m'}$, $C^{(m)}_i$ is a unitary operator that rotates the environment qubit from eigenstate of $O^{(m)}$ to an eigenstate of $O^{(m')}$ 
such that $G^{SE}_i C_i^{(m)} Z_i |m_i,e^{(m)}_i\rangle = e^{i\theta} e^{(m)}_i |m'_i,e^{(m')}_i\rangle$ where $e^{i\theta}$ is a phase not equal to $1$. We can also show that \begin{align}
    G^{SE}_i{}^\dagger C_i^{SE} G^{SE}_i = e^{i\theta} C_i^{SE}. 
\end{align}

We introduce the following marginalized strange correlator which we will see below is related to the mixed-state type-\RN{2} strange correlator,
\begin{align}
    \SC_{ij}^{\RN{2}} = \frac{\mathrm{Tr}\left( \sum_{e^{(m)}} |\langle{\psi_{SE}} | C_i^\dagger Z_i C_j Z_j | {m,e^{(m)}}\rangle|^2\right)}{\mathrm{Tr}\left( \sum_{e^{(m)}}  |\bra{\psi_{SE}} \ket{m,e^{(m)}}|^2\right)}, \label{eq: SC pure modified}
\end{align} 
where we suppress the implicit superscript $(m)$ from operators $C_i$.
Since $\ket{\psi_{SE}}$ is a SPT state with symmetry $G^{SE}$, $|\bra{\psi_{SE}} C_i^\dagger Z_i C_j Z_j \ket{m,e}|^2 = |\bra{\psi_{SE}} \ket{m,e}|^2$ and the above expression for the modified \SC \ is non-zero.

Let us focus on the numerator. We can rewrite it as follows\begin{align*}
    &\sum_{e^{(m)}} |\bra{\psi_{SE}}C^\dagger_i Z_i C_j Z_j|{m,e^{(m)}}\rangle|^2 \\
    &= \sum_{e^{(m)}} |\bra{\psi_{SE}} Z_i Z_j |{m,e^{(m)}}\rangle|^2 \\
    & =  \rho Z_i Z_j \rho_0 Z_i Z_j,
\end{align*}
where $\rho = \mathrm{Tr}_E \ket{\psi_{SE}}\bra{\psi_{SE}}$ is the density matrix of the system. To go to the 2nd line above we have used the fact that $C_i$ is unitary rotation on $e_i$ and since we are summing over the complete basis of $e_i$, the overall rotation can be dropped. The denominator is similarly equal to $\mathrm{Tr}_S \rho \rho_0 $.  Thus we have, \begin{align}
    \frac{\mathrm{Tr}\left( \rho Z_i Z_j \rho_0 Z_i Z_j^\dagger \right)}{\mathrm{Tr}\left( \rho \rho_0 \right)} &= \frac{\mathrm{Tr}\left( \sum_e |\bra{\psi_{SE}} C_i C_j^\dagger \ket{m,e^{(m)}}|^2\right)}{\mathrm{Tr}\left( \sum_e  |\bra{\psi_{SE}} \ket{m,e^{(m)}}|^2\right)},
\end{align}
which is non-zero as argued in the above paragraph.

We have shown above that the symmetry-decoupling condition implies that the type-II strange correlator is non-zero. Moreover, $I_c(L:ERM)>0$ is true if and only if the decoherence is \condition. This concludes the claim that $I_c(L:ERM)>0$ implies a non-zero type-II strange correlator. We perform explicit calculations for strange correlators under \condition channel in Appendix~\ref{app: explicit SC calculation under SDC} and find non-trivial values for strange correlators.

\subsection{Type-I strange correlator} \label{sec: connection to type I SC}

In this section, we make the connection between coherent information $I_c(L:RM)$ and type-\RN{1} strange correlators. Since strange correlators are typically computed when the system is put on periodic boundary conditions, we would like to consider a different setup to better establish this connection. In particular, we would like to consider information transmission between two subsystems $L$ and $R$ for SPT with periodic boundaries (in the case of 1d and 2d cluster state, $L$ and $R$ each represent a qubit). We entangle one of the subspaces, say $L$, with a reference qubit in a Bell pair and ask how much coherent information is been transmitted to $R$ after measuring out the rest of the qubits; refer to Fig.~\ref{fig: MBQC setup} but with periodic boundary conditions. This is given by (see Appendix~\ref{app: i_c})
\begin{equation}\label{eq: coherence information in mixed trajectories}
    \begin{split}
        I_c(L:RM) =
    \sum_m p_m I^{(m)}(L:R) - S^{(m)}(\rho_L)
    \end{split}
\end{equation}
where $I^{\left(m\right)}(L:R)$ is the mutual information between $L$ and $R$ after measuring out the bulk qubits given measurement outcome $m$, $S^{(m)}(\rho_L)$ is the von-Neumann entropy of $L$ in the trajectory $m$, and $p_m$ is the probability of getting measurement outcome $m$ according to the Born rule.

\indent We first want to study the structure of the post-measurement reduced density matrix of $L$ and $R$ given by
\begin{equation}
    \rho^S_m \equiv \tr_{B}\left[ \rho^S \rho_m \right],
\end{equation}
where $B$ is the compliment of $L \cup R$. $\rho_m$ is a projection operator projecting to measurement outcomes $m$ in the region $B$ {(and can also be interpreted as the density matrix of a trivial product state labeled by $m$. This already hints the connection between strange correlator and measurement-based quantum communication.)} If we write $\rho^S_m$ in the symmetry charge basis, we can decompose it into a direct sum of different charge sectors:
\begin{equation}
    \begin{split}
        \rho^{S}_m = \bigoplus_{m_{LR}} \rho^{S}_{m, m_{LR}},
    \end{split}\label{eq: density matrix as tensor sum}
\end{equation}
where $m_{LR}$ labels the symmetry charge on $L$ and $R$. 

Let us focus on cluster state for simplicity. 
The type-I strange correlator is defined as \begin{align}
    SC^\mathrm{I}_{L,R} &= \frac{\mathrm{Tr}\rho^S Z_L Z_R \rho_{m}}{\mathrm{Tr}\rho^S  \rho_{m}}, \nonumber \\
    & = \frac{\mathrm{Tr}\rho^S_m Z_L Z_R \rho_{m_{LR}}}{\mathrm{Tr}\rho^S_m  \rho_{m_{LR}}}, \label{eq: type I SC reduced form}
\end{align}
where in the 2nd line we have traced out all quits except qubits in $L,R$.
It is obvious from the above expression that the off-diagonal elements of $\rho_{m,m_{LR}}^S$ control the behavior of the strange correlator (since $Z$ operator is the charge generating operator for $X$).
Thus, in the X-basis the reduced density matrices are  of the form \begin{equation}
     \rho^{S}_{m,m_{LR}} = p_{m_{LR}}
     \begin{pmatrix}
         \frac{1}{2} & \frac{1}{2}SC^\mathrm{I}_{L,R}  \\
         \frac{1}{2}SC^\mathrm{I}_{L,R} & \frac{1}{2} \end{pmatrix}, \label{eq: density matrix decomposition}
 \end{equation}
where $p_{m_{LR}}$ is the probability to observe charge $m_{LR}$. Note that we assumed that $SC_{L,R}$ does not change with $m_{LR}$ in eq.~\eqref{eq: type I SC reduced form} otherwise the diagonal elements won't be equal. From eq.~\eqref{eq: density matrix decomposition} and \eqref{eq: density matrix as tensor sum} we find that \begin{align}I^{(m)}(L:R) = 2-\sum p_{m_{LR}}S\left(\rho^S_{m,m_{LR}}\right) - H(\{p_{m_{LR}})\},\label{eq: mutual information final exp}\end{align}
where $H(\{p_i\})=-\sum_i p_i\ln p_i$ is the classical entropy. 

Consider $H(\{p_{m_{LR}}\})=0$ which corresponds to when the symmetric charge on $LR$ is fixed, that is the symmetric charge associated with qubits $L,R$ is not getting decohered. This happens, for example, when only one sublattice of the cluster state is being decohered. We immediately see from the above expressions that a non-zero strange correlator for typical trajectories $m$  implies greater than $1$ trajectory mutual information and positive coherent information by eq.\eqref{eq: coherence information in mixed trajectories}. As a special case, for $SC_{L,R}=1$, for example for pure cluster states, the coherent information is maximum. 
As a direct application of the above arguments, we can recover the result in eq.~\eqref{eq: Ic(RM) Z dephaing one sublattice} which was arrived at from the perspective of decoding. It has been computed in \cite{lee2023symmetry} that the Type-\RN{1} strange correlator has the following behavior
\begin{equation}
    \begin{split}
        &SC^{\RN{1}}_{1, 2n+1} = e^{-n/\xi},\ \  \xi = 1/\ln(1/(1-2p)).
    \end{split}
\end{equation}
Putting this in eq.~\eqref{eq: density matrix decomposition} and using eq.~\eqref{eq: mutual information final exp} we find similar decay of $I_c(L:RM)$ as in eq.~\eqref{eq: Ic(RM) Z dephaing one sublattice}.

Similarly, case with $H(\{p_{m_{LR}}\})>0$ can be analyzed and eq.~\eqref{eq: Ic(RM) Z dephasing} can be recovered. Intuitively, higher $H(\{p_{m_{LR}}\})>0$ implies less logical information in the symmetric operator $X_L$ is getting transmitted. Similarly, higher $S(\rho^S_{m,m_{LR}})$ implies less logical information about $Z_L$ being transmitted. The strange correlator on different sublattices is related to the above two entropies and hence to the various behaviors of the coherent information.  

Finally, the observation that $SC_{LR}$ is related to off-diagonal elements of the density matrix is general and not specific to the 1d cluster state. Though the calculation may not remain as simple as above, we expect the results to hold, at least qualitatively, for other SPTs as well.

\subsubsection{Phase transitions in the coherent information and the strange correlator}

Let's consider the 2d cluster state with $Z$-dephasing on the edge qubits. Since the vertex qubits are not being dephased, if we consider information transmission between two far apart vertex qubits, say $v_i$ and $v_j$, logical $\bar{X}$ operator can be transmitted perfectly regardless of the decoherence strength. The transmission of the logical $\bar{Z}$ in each trajectory is diagnosed by the type-\RN{1} strange correlator, which has been shown can undergo a transition as one tune the decoherence strength. Following the calculations in \cite{lee2023symmetry,zhu2023nishimori}, the type-\RN{1} strange correlator is given by
\begin{equation} \label{eq: SC in Z-dephased 2d cluster state}
    \text{SC}^{\RN{1}}_{v_i, v_j}(m) = \left\langle Z_{v_i} Z_{v_j} \right\rangle_{\beta, m_e} 
\end{equation}
The right-hand side of the equation is the correlation function of an Ising model at the inverse temperature $\beta = \tanh^{-1}(1-2p)$ and with bond configuration $m_e$, that is $m_e=-1$ are bonds with antiferromagnetic coupling. For a typical $m_e$, the correlation function would have an order-disorder transition at some finite temperature $p_c$ that depends on the bond configuration $m_e$. Then, some natural questions to ask are whether there is a transition in $I_{c}(L:RM)$, which is described by the trajectory averaged value of mutual information. If there is a transition, then what is its nature, and what is the value of the critical $p_c$?\newline
\indent Before addressing the above questions, we notice that the probability for obtaining measurement outcome $m$ is given by
\begin{equation} \label{eq: RBIM partition function}
    \begin{split}
        p_m \propto Z_{\beta, m_e} = \sum_{\{\sigma\}} e^{\beta \sum_{e = \langle i,j\rangle} m_e \sigma_i \sigma_j},
    \end{split}
\end{equation}
which equals to the partition function of a $RBIM$ given bond configuration $m_e$ and inverse temperature $\beta = \tanh^{-1}(1- 2p)$. Following the discussion in \cite{zhu2023nishimori, lee2022decoding}, if we consider the gauge transformation
    $\sigma_i \rightarrow \tau_i \sigma_i$ and $m_{ij} \rightarrow \tau_i \tau_j m_{ij}$, the partition function can be viewed as a gauge-symmetrized probability distribution of uncorrelated bond disorder with the probability for antiferromagnetic bond given  by $\mathrm{Pr}(m_e=-1)\equiv p_{-} = \frac{1}{1 + e^{2\beta}}$. Such a distribution of bond disorder is said to lie on the Nishimori line\cite{nishimori1981internal}. The RBIM along the Nishimori line has an order to disorder transition at $p_c = 0.109$\cite{honecker2001universality}\newline

\indent Following the intuition that $p_m$ is a gauge-symmetrized bond disorder distribution, the trajectory averaged mutual information in eq.\eqref{eq: coherence information in mixed trajectories} can be written as
\begin{equation}\label{eq: uncorrelated bond disorder averaged coherent information}
\begin{split}
     &\sum_m p_m I^{(m)}(L:R)\\
     &\propto \sum_m \left( \sum_{\sigma} e^{\beta\sum_{\left\langle ij\right\rangle}m_{ij}\sigma_i \sigma_j} \right) I^{(m)}(L:R) \\
     &\propto \sum_{\tau} \left( \sum_{m'} e^{\beta \sum_{\left\langle ij\right\rangle}m_{ij}' } I^{(m')}(L:R) \right) \\
     &\propto [I^{(m)}(L:R)]
\end{split} 
\end{equation}
where from the second to the third line, we gauge fix $m$ to be $m'$ and include a summation over all possible gauge transformation $\tau$. Notice that the mutual information as one can computed from eq.~\eqref{eq: density matrix decomposition} is invariant under this gauge fixing. Therefore, in the fourth line, we show that the trajectory averaged mutual information is equivalent to averaging over uncorrelated bond disorder with the probability for antiferromagnetic bond given by $\mathrm{Pr}(m_e=-1)\equiv p_{-} = \frac{1}{1 + e^{2\beta}}$.\newline
\indent From eq.~\eqref{eq: density matrix decomposition} and~\eqref{eq: SC in Z-dephased 2d cluster state}, the mutual information is a non-linear function of  $|\left\langle Z_{v_i}Z_{v_j}\right\rangle_{m_e, \beta}|$. More precisely, the mutual information $I^{(m)}(L:R)$ is greater than $1$ if the bond configuration $m_e$ is long-range ordered. In the RBIM along the Nishimori line, above the critical inverse temperature $\beta_c$, typical bond configuration $m_e$ is long-range ordered and the fraction of bond configurations that are paramagnetic vanishes in the thermodynamic limit; below the critical inverse temperature $\beta_c$, typical bond configuration $m_e$ is paramagnetic and the fraction of bond configurations that are long-range ordered vanishes in the thermodynamic limit\cite{honecker2001universality, lee2024exact}. From \eqref{eq: density matrix decomposition}, the term $S^{(m)}(\rho_L)$ in each trajectory is simply $1$. As a result, the coherent information $I_c(L:RM)$(or one can say the disorder averaged mutual information $\left[ I^{(m)}(L:R) \right]$) sees the same order-disorder transition in the RBIM along the Nishimori line. When $p < p_c$, $I_c(L:RM) > 0$ and there is quantum information transmitted between $L$ and $R$; when $p > p_c$, $I_c(L:RM) = 0$ and there is only classical information transmitted between $L$ and $R$.   \newline

\section{Mixed SPTs away from the fixed point}\label{sec: away from fixed point}

We can ask if the results we have so far are true for the entire SPT phase, that is, decohering any state from a SPT phase, are the results obtained qualitatively true? The connection between coherent information and type-I and II strange correlators still holds when we move away from the fixed point suggesting that the non-trivial nature of the decohered state should be robust.  

 We give a heuristic argument in 1d that the information transmission is well-defined away from the fixed point. The exact setup we have in mind is as follows. We start with the fixed point of the phase and apply a symmetric perturbation so that the resulting state remains in the SPT phase albeit away from the fixed point. We model the perturbation by symmetric quantum circuits. The decoherence acts on the perturbed state. 

We give a protocol to use the perturbed pure state for transmitting information; we will return to the decohered case later. Let the depth of the circuit be $d$ where the depth is defined as the number of layers in the symmetric circuit (odd and even layers are counted as different layers). The information initially entangled at the boundary qubits are now spread to a distance $d$ away from the boundary. Let us include these extra qubits to define a new boundary $L'=L+\Delta L$ (see the 1st row of Fig.~\ref{fig: non-fixed}). We want to transmit information from $L'$ to $R'$. This can be achieved as follows.

\begin{figure}
    \centering
\includegraphics[width=0.9\linewidth]{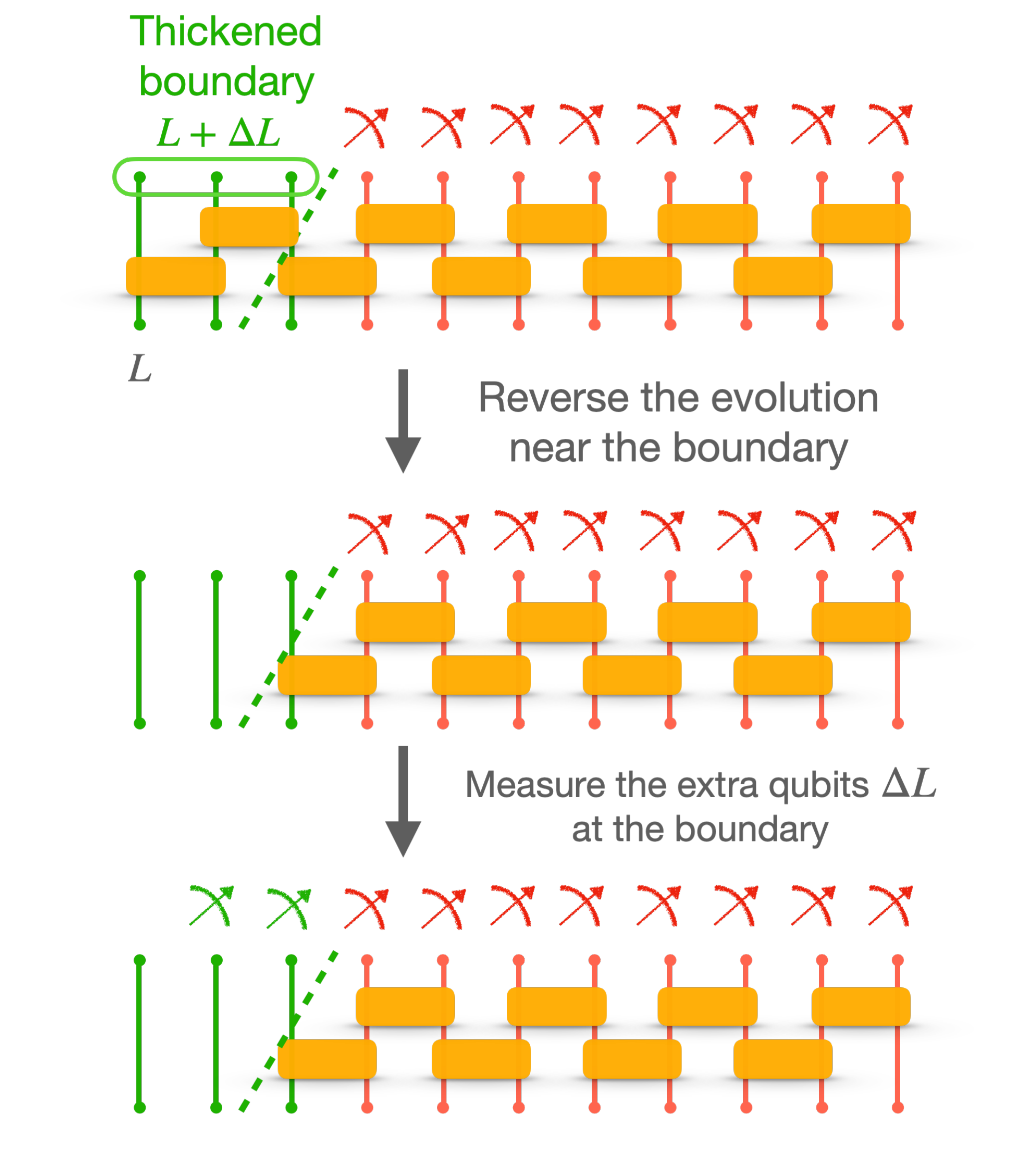}
    \caption{Quantum communication away from the fixed point of pure SPTs. We model the symmetric perturbation by a short-depth symmetric unitary circuit. To transmit information from one edge to another we need to thicken the boundaries to include the qubits inside the light cone (green qubits). The procedure shown in the figure allows for the bulk symmetric charge to be determined from the measurement outcomes (see main text). This leads to the transmission of quantum information. When the circuit depth becomes of the order of the system size the light cone of the boundaries starts to overlap and there is no well-defined notion of quantum communication.}
    \label{fig: non-fixed}
\end{figure}

The boundary $L'$ then applies the inverse of the circuit inside the light cone connecting $L$ and $L'$ (2nd row in Fig.~\ref{fig: non-fixed}). This is always allowed as we are now transmitting information from $L'$ instead of $L$ and any local rotation on $L'$ and $R'$ are allowed. The same thing is done at $R'$. The symmetry charge measurements are now performed everywhere except at $L,R$ and let $\{m_i'\}$ be the measurement outcomes. Since the circuit commutes with the symmetry, the symmetric charge $\prod m_i'$ is also the charge before the circuit is applied. To be more precise, let us perform bulk symmetric charge measurement $\prod_i X_i$ followed by the single site measurements $X_i$. If the short depth circuit is $V$ (excluding the gates near the boundary which got removed when applying the inverse circuit inside the light cones of the boundaries), then the post-measurement state after bulk measurement is $P_{\prod m_i'} V \ket{\psi}=V P_{\prod m_i'}\ket{\psi}$, where $P_{\prod m_i'}$ is projection to $\prod_i X_i = \prod_i m_i'$ ($i$ reside in the bulk). The projection $P_{\prod m_i'}\ket{\psi}$ creates entanglement between $L,R$, and as $V$ does not act on the boundaries, $P_{\prod m_i'} \ket{\psi'}$ also has entanglement across $L,R$. Performing single-site measurements in the bulk won't disturb this entanglement.

Since the value of $\gamma$ labeling the post-measurement logical state depends only on the total symmetric charge, the post-measurement logical state on $L,R$ is the same as that at the fixed point, and thus $I_c^{(0)}(L':R'M)=1$. This is true provided $d$ does not scale with the system size $N$. When $d$ is of order $N$ the two boundaries $L',R'$ boundary overlap. Note that the effect of the unitary circuit may still be present in the non-logical space but it is of no concern for the purpose we have in mind.

Let us now introduce decoherence to the SPT state. After the short-depth symmetric circuit $V$, some qubits are succumbed to decoherence. Let us for simplicity assume that decoherence acts only on one site $i$ in the bulk. Similar to the protocol without decoherence, we reverse the light cone near the boundaries by applying the inverse of the unitary gates. We then measure all bulk qubits in the on-site symmetry basis. The environment can then measure $O^{(m_i)}_i$ depending on the measurement outcome $m_i$ in the bulk. If $U_i$ is the interaction between the system and environment at site $i$, then we have $P^E_{e_i}P^S_{m_i}U_i = P^S_{m_i} P^{SE}_{e_i}U_i = P^S_{m_i}U_i P^S_{e_i}$, where $P^E_{e_i}$ is the projection to eigenvalue $e_i$ of the operator $O^{(m_i)}_i$, $P^S_{m_i}$ is the projection to $X_i=m_i$, and $P^{SE}_{e_i}$ is the projection to $X_i'=U_iX_iU^\dagger_i=e_i$. As argued above, the projector $\prod_i P^S_{e_i}$ does not change under the symmetric circuit. Thus using $\prod_i e_i$ after measuring the environment operators $O^{(m_i)}_i$ we can learn the value of $\gamma$ and determine the bell pair between the boundaries. 

The argument above can be generalized to decoherence acting on multiple qubits outside the light cone of the boundaries ($L+\Delta L, R+\Delta R$). Decoherence inside the light cone can leak some of the information to the environment but $I_c(L':ER'M)$, which also includes the environment qubits entangled within the light cone, should not change. But the decoherence inside the light cone might reduce $I_c(L':R'M)$ and there might be a phase transition with respect to the strength of decoherence. We leave a detailed study of these possibilities for future work.

As a concrete example in two dimensions, we study the effect of decoherence on a perturbed 2d cluster state with Hamiltonian\begin{equation}
    H(\lambda) = -\sum_{e} X_e \prod_{v \in e} Z_v - \sum_v X_v{\prod_{e \ni v} Z_e} + \sum_{v}e^{-\lambda \sum_{e\ni v}X_e}. \label{eq: perturbed 2d cluster state H}
\end{equation}
The strength of the perturbation is controlled by $\lambda$.

We show in Appendix \eqref{sec: Decohered 2d cluster state away from the fixed point} that the ground state wavefunction is given by
\begin{equation}
    \ket{\psi(\lambda)} \propto e^{\lambda \sum_{e} \frac{\hat{X}_e}{2}} \ket{\psi_{0}},
\end{equation}
where $\ket{\psi_{0}}$ is the 2d cluster state fixed-point wavefunction. We calculate type-\RN{1} strange correlator $SC^{\RN{1}}$ and find it to be non-zero in the regime of $1 \gg \lambda > 0$ (where the perturbation is equivalent to adding a small transverse field in the fixed point Hamiltonian). Using results from Sec.~\ref{sec: connection to type I SC}, there thus exists a mixed-state phase for non-zero $\lambda$ where $I_c(L:RM) > 0$. This is in agreement with the heuristic argument above based on quantum circuits.

\section{Discussion and outlook}

\subsection{Mixed SPT as quantum channels}\label{sec: quantum channels/ QEC}
The results in this paper can also be seen from the perspective of quantum error correction. Similar to the pure case~\cite{Raussendorf_2001,Raussendorf_2003,Briegel_2009}, we view decohered SPTs in $d$ dimension as a $(d-1) + 1$ dimensional virtual evolution of the boundary when the bulk is measured in a symmetric basis. In the absence of decoherence, the virtual evolution is unitary.  The decoherence acts as an error in this virtual evolution~\cite{Raussendorf_2005,Roberts_2020,Bolt_2016}. We show in Appendix~\ref{appendix: quantum channel} that having positive $I_c(L:RM)$ is related to having a finite error threshold of this noisy virtual evolution. More precisely, $I_c(L:RM)$ is equal to the amount of information surviving after the noisy evolution. This idea is illustrated for cluster states in various dimensions in Appendix~\ref{appendix: quantum channel}. See also Fig.~\ref{fig:spt as channel}. It is also known that CSS codes can be foliated to cluster state on some graph~\cite{Bolt_2016}. This immediately implies that such cluster states in the presence of decoherence are related to quantum error correction in the corresponding CSS code.

\subsection{Symmetry-decoupling channels and the role of weak symmetry in mixed SPT order}
The coherent information $I_c(L:ERM)$ is non-zero if the channel satisfies \condition condition: the on-site symmetry $G=\otimes_i G_i$ under the system-environment interaction is transformed as $U G_i U^\dagger = \sum_m O^{(m)}_i P_m$, where $P_m$ is projector onto symmetry charge $m$ of $G_i$, and $O^{(m)}_i$ is an $m$ dependent unitary operator acting on the environment qubits. If $G_i$ is measured on the system then a corresponding measurement of $O^{(m_i)}$ can be performed on the environment to learn the local charge at site $i$.
We also present an example of a channel satisfying the above condition but is \emph{not} weakly symmetric. 
We prove in Theorem~\ref{thm 1} that for \condition channels, the measurements do not destroy the quantum information. 

A special and important subclass of these channels is the weakly symmetric channels.
A weakly symmetric channel takes a symmetric pure state $\rho_0$ to $\rho_D = \sum_a K_a\rho_0 K_a^\dagger$ and satisfies,\begin{align}
    G \rho_D G^\dagger = G,
\end{align}
where $G$ is the symmetry and $G\rho_0 = \rho_0 G^\dagger = e^{i\theta_g}\rho_0$. In other words~\cite{de2022symmetry},\begin{align}
    GK_aG^\dagger = \sum_b V_{ab}K_b,
\end{align}
where $V$ is a unitary rotation among the Krauss operators. For \condition channel, we can show that the channel is weakly-symmetric if and only if $mO^{(m)}=)$

Since $\sum_{m_i} m_i^{-1}P_{m_i} = G_i$, weak symmetry condition is equivalent to $UG_iU^\dagger = O_i G_i$, where $O_i$ is some unitary operator on the environment at site $i$. The new symmetry can thus be decomposed as $UGU^\dagger = O\otimes G$, where $O$ is a symmetry action on the environment.

We now ask the question: how important is weak symmetry to have decohered mixed-state SPTs? Or more appropriately, how does decohered symmetry help protect such states against symmetric perturbations? As shown in the text, even a non-weak-symmetric channel preserves the quantum communication property of the pure SPT. This is also stable against decohering a SPT initialized away from the fixed point as shown in Sec.~\ref{sec: away from fixed point}. The existence of mixed-SPT order using the strange correlators also relied on the presence of weak symmetry that results in symmetry in the doubled Hilbert space of the density matrix. However, as shown in this work, the strange correlator even without weak symmetry can be non-vanishing. These and other related questions suggest a more careful study of the role of weak symmetry in protecting the mixed SPT order is required. 

One consequence of having weak symmetry is as follows. A given density matrix can be decomposed in a non-unique was as $\rho=\sum_a p_a \ket{\psi_a}\bra{\psi_a}.$ The trajectories $\ket{\psi_a}$ can be thought of as performing measurements on the environment on a specific basis. More precisely, there exists a purification, $\ket{\Psi_{SE}} = \sum_a \sqrt{p_a} \ket{\psi_a}\ket{a_E}$ such that measuring the environment in basis $\{\ket{a_E}\}$ projects the system to trajectories $\{\ket{\psi_a}\}$. For mixed SPTs with $I_c(L:ERM)>0$ and weak symmetry with composite symmetry $O\otimes G$, we can project the environment to eigenstates of $O$ to get symmetric non-trivial states, that is, the system's trajectories can be used to transmit quantum information by measuring the bulk. In other words, there is a decomposition $\rho=\sum_a p_a \ket{\psi_a}\bra{\psi_a}$ such that each of the (or typical) $\ket{\psi_a}$ has non-trivial SPT order or edge modes for weakly-symmetric mixed states and $I_c(L:ERM)>0$. 
This also means there is an ensemble of Hamiltonians $\{H_a\}$ whose ground states $|\psi_a\rangle$ are resources for quantum communication. Thus $I_c(L:ERM)>0$ and weak symmetry implies the existence of a disordered Hamiltonian with average SPT~\cite{ma2023average}.

\subsection{Connections to other probes of mixed SPT order}
The quantum communication ability of mixed SPTs can also be related to other probes. As shown in Sec.~\ref{sec: connection to SC}, the coherent information has intimate connections to strange correlators~\cite{lee2023symmetry,zhang2022strange}. There we prove that \condition channels imply both $I_c(L:ERM)$ and type-II strange correlator to be non-zero, suggesting a connection between them. In \cite{zhang2022strange}, the authors considered a strange correlator defined using fidelity between the mixed-SPT density matrix and a trivial state with weak symmetry. Such a trivial state can be thought as an ensemble of pure states with different symmetry charges. Therefore, such a fidelity strange correlator also has the spirit of summing over all ``measurement outcomes of symmetry charges" as $I_c(L:ERM)$. We believe the symmetry-decoupling channels would also imply a non-trivial fidelity strange correlator, and we leave the rigorous proof to future works. Moreover, positive $I_c(L:RM)$ implies that all type-I strange correlators are non-zero. On the other hand, a zero value for $I_c(L:RM)$ is an indication of the presence of some long-range classical correlation and some of the strange correlators might be zero.

The classification of mixed SPTs based on separability  introduced in~\cite{chen2023symmetry, chen2023separability} relies on the existence of a decomposition $\rho=\sum_a p_a\ket{\psi_a}\bra{\psi_a}$ such that each $\ket{\psi_a}$ is trivial. A density matrix is called symmetric long-range entangled if such a decomposition does not exist. As discussed above, $I_c(L:ERM)$ is a diagnosis for the existence of decomposition where each trajectory is symmetric long-range entangled. Thus there is no direct connection between these two probes though we believe $I_c(L:RM)$ to be closely connected to separability criteria. When $I_c(L:RM)$ is maximum the density matrix is symmetric long-range entangled based on separability since, irrespective of the measurement outcomes in the environment, the system's mixed state is a resource to transmit quantum information. Thus every decomposition of such a density matrix will have trajectories with non-trivial edge modes. But in addition to this, we also find for 1d, 2d cluster states that for $I_c(L:RM)=0$ (the classical information can be transmitted), the density matrix is symmetric long-range entangled using separability as shown in~\cite{chen2023symmetry}. This motivates the conjecture that a density matrix has symmetry-protected long-range entanglement based on separability if the coherent information $I_c(L:RM)\geq 0$ and vice versa. To what extent $I_c(L:RM)$ and the separability probe are connected is left for future work. 

Another approach to defining mixed state order is to use an equivalency class of mixed states under finite depth local channels~\cite{sang2023mixed,ma2023average, sang2024stability}. For SPTs, a mixed state is considered trivial if it can be prepared or made trivial using a symmetric finite depth local channel. We leave the connection of the behavior of coherent information with that of the equivalency class to future work. However, one thing is clear any mixed state capable of communicating quantum information should not be able to be prepared starting from a trivial state in finite quantum time. This is so because the channel can be purified using an environment and the combined system and environment should be trivial if the system was in a trivial state to begin with. 

In this work, we focussed on examples with the boundary logical space of size independent of the system size. Generally, higher-forms symmetries should not be able to transmit extensive amounts of information and one needs to consider subsystem symmetry-protected topological states, SSPTs~\cite{you2018subsystem}, such as states with line symmetry~\cite{raussendorf2019computationally} and fractal symmetry~\cite{devakul2018universal}, to get a SPT phase capable of communicating extensive amount of information. The formalism introduced in this paper can be easily extended to SSPTs. We leave this open for future work.

\textit{Note added.} While writing this manuscript two preprints appeared on arxiv~\cite{sala2024spontaneous,lessa2024strongtoweak}. The authors studied spontaneous strong symmetry breaking (SSSB) for mixed states. The studies suggest that SSSB can be understood as the system being in a SPT state with the environment. We leave a detailed analysis between SSSB and mixed SPTs as defined in this paper for future consideration. 

\acknowledgements{Z.Z, U.A, and S.V. thank Yimu Bao, Yu-Hsueh Chen, Tarun Grover, and Ali Lavasani for helpful discussion. Z.Z also thanks Tim Hsieh, Tsung-Cheng Lu, Yichen Xu, and Jian-Hao Zhang for helpful discussions. This work was supported by the Simons Collaboration on Ultra-Quantum Matter, which is a grant from the Simons Foundation (651440, U.A.), and an Alfred P. Sloan Research Fellowship (S.V.)}

\appendix

\section{Coherent information in trajectories} \label{app: i_c}

Consider a quantum channel $\mathcal{N}$ which acts on a initial density matrix $\rho_0$ as follows,\begin{align}
    \mathcal{N}[\rho_0] = \sum_m p_m \mathcal{N}_m[\rho_0] \otimes \ket{m}\bra{m},
\end{align}
where $m$ is the trajectory label and may, for example, correspond to measurement outcomes, $\mathcal{N}_m$ is the channel in trajectory $m$. We say that the channel has pure trajectories if $\mathcal{N}_m$ maps pure state to pure state. 

We want to study the coherent information of such channels.
To do so, we take the initial state to be a mixed state with one bit of entropy or entanglement. We can think of it as a pure initial state that is entangled with an ancilla $A$ and the combined state of $A$ and the system is pure initially. Let the state of the system at a later time be denoted by $Q$. Then the coherent information about $A$ in $Q$ and $M$ is given by\begin{align}
    I_c = S(\rho_{QM}) - S(\rho_{AQM}).
\end{align}

\paragraph{Channel with pure trajectories.} For pure trajectories\begin{align}
    \rho_{AQM} = \sum_m p_m \ket{\psi^{(m)}_{AQ}}\bra{\psi^{(m)}_{AQ}} \otimes \ket{m}\bra{m}.
\end{align}
Putting this in the expression for the coherent information we get\begin{align}
    I_c &= \sum_m p_m \left(  S(\rho^{(m)}_{Q}) - \log p_m \right) - \sum_m -p_m \log(p_m)  \nonumber \\
        &= \sum_m p_m S(\rho_Q^{(m)}).
\end{align}
This is the usual result for the coherent information for pure trajectories.

\paragraph{Mixed trajectories.} For channel with mixed trajectories we instead have \begin{align}
    \rho_{AQM} = \sum_m p_m \rho_{AQ}^{(m)} \otimes \ket{m}\bra{m},
\end{align}
where $\rho_{AQ}^{(m)}$ is no longer pure. In this case the coherent information is given by,\begin{align}
    I_c &= \sum_m p_m \left(  S(\rho^{(m)}_{Q}) - \log p_m \right) - \sum_m p_m \left(  S(\rho^{(m)}_{AQ}) - \log p_m \right) \nonumber \\
        &= \sum_m p_m  \left(  S(\rho^{(m)}_{Q}) - S(\rho^{(m)}_{AQ}) \right) \nonumber \\
        &= \sum_m p_m I^{(m)}(A:Q) - \sum_m p_m S(\rho_A^{(m)}) \nonumber \\
        &= \sum_m p_m I^{(m)}(A:Q) - (|A| - I_{\mathrm{dest}}) ,
\end{align}
where $I^{(m)}(A:Q)$ is the mutual information between A and Q for trajectory $m$, and $I_{\mathrm{dest}}$ is the amount of coherent information destroyed by the measurements.

\section{Calculations for decoder}\label{app: decoder}
Let the SPT state in 2 dimension $\rho_0$ be decohered to $\rho = \sum_i K_i \rho_0 K_i^\dagger$ where $K_i$ are Krauss operators. A measurement performed on qubit $i$ is termed erroneous if the actual symmetric charge at that site is different than what is observed. Such locations are called error locations. The probability of a qubit being an error location given measurement outcomes $m$ is given by,\begin{align}
    P(\epsilon|m) = \mathrm{Tr} \prod_i \frac{1-m_iG_i'}{2} \rho_m,
\end{align} 
where $\rho_m = P_m \rho_{SE} P_m / \mathrm{Tr}(P_m\rho_{SE})$ is the post-measurement state, $G_i'$ is the modified symmetry charge operator supported on the system and the environment. 

For Z-dephasing with strength $p$, $G_i' = Z^E_i X_i$ and $P(\epsilon,m) = \prod_{i\in \epsilon} p \prod_{i\notin \epsilon}(1-p) \times \sum_{C} \prod_{i\in C} m_i {\epsilon_i}$, where $C$ is a close loop on the lattice. 

Similar calculations can be done for \condition channels. The channel described in eq.~\eqref{eq: SDC U} is acted on the system with probability $q$ and with probability $1-q$ no decoherence happens. We want to calculate the probability $P(m,\epsilon)$ where $\epsilon$ is a set of qubits with erroneous observed outcomes. It can be calculated as follows
\begin{widetext}
\begin{align}
    P(m,\epsilon) &= \mathrm{Tr} \prod_{i\in \epsilon} \frac{1-mG_i'}{2}  \prod_{i\notin \epsilon}\frac{1+mG_i'}{2} \left(\frac{1+mX_i}{2} \rho_{SE} \frac{1+mX_i}{2}\right),
\end{align}
\end{widetext}
where $G_i'$ is the modified local symmetric charge. The above expression can be simplified to \begin{align}
    P(m,\epsilon) \sim &  \prod_{i\in \epsilon} q_i \prod_{i \notin \epsilon} \left(1-q + q_i\right)\ \  \times  \label{eq: disorder distribution}\\
    & \times \sum_{C} \prod_{i\in C} m_i {\epsilon_i}, \nonumber
\end{align}
where $C$ is a close loop on the lattice, $q_i =q \left(\frac{1+rm_i}{2}\right) $, $r=\langle X^E \rangle$ is the average value of the environment's qubits initial state, and $\epsilon_i = -1$ if $i$ is an erroneous qubit and $+1$ otherwise. Here we abuse the notation by denoting the set of erroneous qubits by $\epsilon$ and $\epsilon_i = \pm 1$ also being a variable defined on the erroneous set. Finally, the last term $\prod_{i\in C} m_i {\epsilon_i}$ is a constrain on the error path given measurement outcomes (or the other way round), that is, the error paths can only begin and end on plaquettes with $\prod_{i\in \square}m_i = 1$. This is a generic constraint not special to the model under study. We do away with this term by putting the above constraint in the summation of the error paths $\epsilon$. This also means that summation over $\epsilon$ is performed before summation over $m_i$.

To transmit the information across the cluster state one needs to know the symmetric charge string running across the boundary. Without decoherence, the string is just the product of the measurement outcomes along any string. However, with decoherence, there are bonds with erroneous measurement outcomes. In this case, a decoder is needed to learn the correct symmetric string. The probability that the decoder successfully learns the charge string is equal to the probability that the decoder guesses an error path that is equal to the actual error path up to a close loop, $\epsilon' = \epsilon + \omega$ where $\omega$ is a close loop on the dual lattice. A close loop of error is benign as any charge string will pass an even number of times through the error loop and thus would not change the true value of the charge string. There are different equivalency classes of error paths which are distinguished by the presence or absence of non-contractible loops. If the decoder mistakenly adds a non-contractible loop then the prediction from the decoder would be wrong. The probability of the error lying in an equivalency class is given by,\begin{align*}
    P(m,\bar{\epsilon}) = \sum_{\omega} P(m,\epsilon+\omega), 
\end{align*}    
where $\omega$ are close loops. Only endpoints of the erroneous strings $\epsilon$ can be known a priori by recognizing plaquettes with $\prod_{i\in \square}m_i=-1$. Let us denote the endpoints by $s$. The input to the decoder are endpoints of the error path $s$ and it chooses a random path $\epsilon$ connecting the endpoints. The decoder then computes $P(m,\overline{\epsilon+\lambda_i})$, where $\lambda_i$ are non-contractible loops distinguishing different equivalency classes. The output of the decoder is the equivalency class with maximum probability, or likelihood,\begin{align}
    \delta = \mathrm{argmax}_i P(m,\overline{\epsilon+\lambda_i}) 
\end{align} 
Below the threshold when the decoder is supposed to work we expect that $P(m,\overline{\epsilon+\delta})\rightarrow 1$. An useful diagnostic for the decoder performance is $\Delta(m,s) = \log \frac{P(m,\overline{\epsilon+\delta})}{P(m,\overline{\epsilon+\delta'})}$,
where $\delta'$ is the less likely equivalency class and goes from $\infty$ to $0$ across the threshold transition.

Using standard mapping of the error correcting codes to stat mech models, the probability of the equivalency is proportional to a partition function of a quenched RBIM coupled to another RBIM at zero temperature,\begin{align}
    P(m,\overline{\epsilon}) \sim \left(\sum_{\tau} e^{\beta \sum_{ab} J_{ab} \epsilon_{ab} \tau_a\tau_b + hm_{ab}}\right), \label{eq: stat mech model}
\end{align}
where $\tau_a$ are Ising spins living on the plaquettes of the original lattice. We have \begin{align*}
    e^{\beta J_{ab} + hm_{ab}} &\propto 1-q+q_{ab} \\
    e^{-2\beta J_{ab}} &= q_{ab}/(1-q+q_{ab}).
\end{align*}
Note that $q_{ab}$ and hence $J_{ab}$ depends on $m_{ab}$. Also, we denote the same bond in two different ways, $(ab)$ and $i$, where the former indices live on the dual lattice and the latter index lives on the original lattice. 
The average performance of the decoder is probed by\begin{align}
    \langle \Delta \rangle = \sum_m  \sum_{\epsilon} P(m,\epsilon) \Delta(m,\epsilon).
\end{align}
The above quantity is a quenched average of the stat-mech model described in eq.~\eqref{eq: stat mech model} with quench disorder variables $m_{ab},\epsilon_{ab}$ distributed according to distribution $P(m,\epsilon)$ in eq.~\eqref{eq: disorder distribution}.
Though we are not able to solve the above statistical mechanic model, we believe it to have an order-disorder transition with respect to $q$. We leave a detailed study of the transition for another work.

For Z-dephasing however, exact results can be known. The probability of an equivalency class in this case is given by, \begin{align}
P(m,\bar{\epsilon})&=\sum_{\omega}P(m,\epsilon+\omega) \\
    &\sim \sum_{\sigma} e^{\beta\sum_{ij}\epsilon_{ij}\sigma_i\sigma_j},
\end{align}
where $\sigma_i$ are Ising spins, and $e^{-2\beta}=p/(1-p)$. The resulting model is the 2d RBIM at the Nishimori line which has been extensively studied in the context of decoder for Toric Code~\cite{dennis2002topological}.

\section{Properties of \condition channels}
\subsection{Symmetry properties of \condition channels} \label{app: SDC under symmetry}

The \condition channel (SDC) is generated by system-environment on-site interaction\begin{align}
U_i=CNOT\cdot e^{i\theta Y^E_i}\cdot SWAP. 
\end{align}
The environment qubits are initialized in the product state of $\ket{e_0}$. The original symmetry $\prod X$ is transformed to $X_i' = UX_iU^\dagger = aX_iX_i^E + bIZ_i^E$, where $a=\sin \theta$ and $b=\cos \theta$. Let us for simplicity of notation consider SDC acting only on one site, $i$. The density matrix of the system evolves as \begin{align}
    \rho = \sum_{\alpha=0,1} K_\alpha \ket{\psi} \bra{\psi} K_\alpha^\dagger,
\end{align}
where $\ket{\psi}$ is the initial state of the system and is symmetric. The Krauss operators $K_i$ are defined as \begin{align}
    K_\alpha \ket{\psi} = \ket{\alpha^E} U_i \ket{\psi e_0},
\end{align}
where $\ket{\alpha^E}$ are eigenstates of $Z_i^E$ (or any other complete basis).
We note that \begin{align}
    \prod X K_0 \ket{\psi} &= \bra{0^E} \prod X  U_i \ket{\psi e_0} \nonumber \\
    & = \bra{0^E} \prod X  U_i \prod X \ket{\psi e_0} \nonumber \\
    & = \bra{0^E} X_i X_i'  U_i  \ket{\psi e_0} \nonumber \\
    &= a K_1 + b X_i K_0.
\end{align}
Similarly, \begin{align}
      \prod X K_1 = aK_0 - b X_i K_1.
\end{align}

We also observe that if the environment starts in the eigenstate of $X$ then $X_i K_\alpha = K_\alpha$. This is due to the relation $X_i U = U X_i^E$.

The channel is called weak-symmetric if $\prod X \rho \prod X=\rho$. Calculating the left-hand side for SDC using the above equations leads to\begin{align}
    \prod X \rho \prod X &= |a|^2 \rho + |b|^2 X_i \rho X_i + \nonumber \\
    &+ \left(ab K_1\rho_0 K_0^\dagger X_i - ab K_0 \rho_0 K_1^\dagger X_i\right) + \mathrm{h.c}
\end{align}  
and is clearly not equal to $\rho$ for generic $\rho$. However, there are special cases where the channel is weakly symmetric. The first case is when $b=0$. The second, more subtle, case is for when the environment is initialized in eigenstates of $X$ operator. Using $X_i K_\alpha = K_\alpha$ we can show that the weak-symmetry condition is satisfied.

Another way to see non-weak-symmetry of the channel is using explicit calculation of $K_\alpha$. Again restricting to single-site decoherence we have the following. If the environment initial state $\ket{e_0}$ is generally given by $\ket{e_0} = \cos\phi \ket{0^E} + \sin\phi \ket{1^E}$, $K_0$ and $K_1$ are given by
\begin{equation}
    \begin{split}
        &K_0  \equiv \bra{0^E} U  \ket{e_0} \\
        &= \frac{1}{2}\cos(\theta - \phi) + \frac{1}{2}\sin(\theta + \phi) X + \frac{1}{2}\sin(\theta - \phi)iY\\
        &+ \frac{1}{2}\cos(\theta + \phi) Z   \\
        &K_1 \equiv \bra{1^E} U \ket{e_0} \\
        &= \frac{1}{2}\cos(\theta + \phi) - \frac{1}{2}\sin(\theta - \phi) X + \frac{1}{2}\sin(\theta + \phi)iY\\
        &- \frac{1}{2}\cos(\theta - \phi) Z   \\
    \end{split}
\end{equation}
where $\alpha \equiv \cos\theta + \sin\theta$ and $\beta \equiv \cos\theta - \sin\theta$. \newline
\indent To determine whether the channel is weakly-symmetric or not under $\prod X$, we want to check if there exists a unitary $x$ such that $\left(\prod X\right) K_i \left(\prod X\right) = \sum_{j} x_{ij} K_j$\cite{de2022symmetry}. That is, after the symmetry transformation, the Kraus operators are rotated by a unitary. We then find that, in addition to $\cos\theta=0$, for $\phi = \frac{\pi n}{2} - \frac{\pi}{4}$, where $n$ is an integer, the channel is weakly-symmetric. 

\subsection{Type-\RN{2} Strange correlator}\label{app: explicit SC calculation under SDC}

In this section, we want to present a detailed calculation of the type-\RN{2} strange correlator for the 1d cluster state decohered by a symmetry-decoupling channel. A local symmetry-decoupling channel  can be written as
\begin{equation}
    \begin{split}
        &\mathcal{E}_{i}[\rho] = (1 - p) \rho + p K_0\rho K_0^{\dagger} + p K_1 \rho K_1^{\dagger},
    \end{split}
\end{equation}
where $K_0$ and $K_1$ are given by the expressions above.

\indent To facilitate our calculation of strange correlators, we first want to find what is the conjugate channel acting on the density matrix $\frac{1 + X}{2}$,
\begin{equation}
    \begin{split}
        \mathcal{E}^{*}[\frac{1 + X}{2}] &= (1 - p) \frac{1 + X}{2} + \frac{p}{2} + \frac{p}{2} (K_0^{\dagger} X K_0 + K_1^{\dagger} X K_1) \\
        &= (1- p) \frac{1 + X}{2} + \frac{p}{2} +  \frac{p}{2}\sin(2\phi) \\
        &= \frac{1 + p\sin(2\phi) + (1 - p) X}{2}.
    \end{split}
\end{equation}
The numerator of the type-\RN{2} strange correlator on odd sublattice is given by
\begin{equation}
    \begin{split}
        &\tr[\mathcal{E}[\rho]Z_{2i + 1} Z_{2j + 1} \rho_0 Z_{2j + 1}Z_{2i + 1}] \\
        &= \tr[\rho Z_{2i + 1} Z_{2j + 1} \mathcal{E}^*[\rho_0] Z_{2j + 1} Z_{2i + 1}] \\
        &= \bra{\psi} \left( \prod_{k = 2i + 1}^{j} X_{2k} \right) \mathcal{E}^*[\rho_0] \left( \prod_{l = 2i + 1}^{j} X_{2l} \right) \ket{\psi} \\
        &= \tr[\rho \mathcal{E}^*[\rho_0]] \\
        &= \tr[\mathcal{E}[\rho] \rho_0]
    \end{split}
\end{equation}
since $\mathcal{E}^*[\rho_0]$ only contains identities and Pauli-$X$s, $\prod_{k=i+1}^{j}X_{2k}$ acting simultaneously on both sides of $\mathcal{E}^*[\rho_0]$ are cancelled. The rest of the term is nothing but the denominator of the type-\RN{2} strange correlator. Therefore 
\begin{equation}
    \frac{\tr[\mathcal{E}[\rho]Z_{2i +1} Z_{2j + 1}\rho_0 Z_{2j + 1}Z_{2i+1}]}{\tr[\mathcal{E}[\rho]\rho_0]} =1.
\end{equation}
\indent The numerator of the type-\RN{2} strange correlator on even sublattice is given by
\begin{equation}
    \begin{split}
        &\tr[\mathcal{E}[\rho] Z_{2i}Z_{2j} \rho_0 Z_{2j}Z_{2i}] \\
        &= \tr[\mathcal{E}[\rho] \left( \prod_{k \neq 2i, 2j}\frac{1 + X_k}{2} \right) \frac{1 - X_{2i}}{2} \frac{1 - X_{2j}}{2}] \\
        &= \bra{\psi} \left( \prod_{k} \frac{1 + X_{2k + 1}}{2} \right) \left( \prod_{l \neq i,j} \frac{1 + p\sin(2\phi) + (1 - p) X_{2l}}{2} \right)\\
        &\frac{1 - p\sin(2\phi) - (1 - p)X_{2i}}{2} \frac{1 - p\sin(2\phi) - (1 - p)X_{2j}}{2} \ket{\psi} \\
        &= 2\frac{\left(1 + p\sin(2\phi)\right)^{L - 2} \left( 1 - p\sin(2\phi) \right)^2 }{2^{2L}} + 2\frac{\left( 1-p \right)^{L}}{2^{2L}},
    \end{split}
\end{equation}
where to get the last line we notice that only operators $\mathds{1}$, $\prod_i X_{2i+1}$, $\prod_i X_{2i}$, and $\prod_{i} X_i$ have non-zero expectation value.\newline
\indent Similarly, we get the denominator to be
\begin{equation}
    \begin{split}
        \tr[\mathcal{E}[\rho] \rho_0] = 2\frac{(1 + p\sin(2\phi))^L}{2^{2L}} + 2\frac{(1 - p)^L}{2^{2L}}
    \end{split}
\end{equation}
The type-\RN{2} strange correlator on even sublattice equals to
\begin{equation}
\begin{split}
    &\frac{\tr[\mathcal{E}[\rho]Z_{2i} Z_{2j}\rho_0 Z_{2j}Z_{2i}]}{\tr[\mathcal{E}[\rho]\rho_0]} \\
    &= \frac{\left(1 + p\sin(2\phi)\right)^{L - 2} \left( 1 - p\sin(2\phi) \right)^2 + \left( 1-p \right)^{L}}{(1 + p\sin(2\phi))^L + (1 - p)^L}.
\end{split}
\end{equation}
Therefore, we see the type-\RN{2} strange-correlator on even sublattice exhibits an area law, which is independent of the separation between site $2i$ and $2j$.

\section{Dephased 2d cluster state} \label{sec: detail calculation of Z-dephased 2d cluster state}
For completeness of the paper, we present the detailed calculations in Sec~.\ref{sec: connection to type I SC} and similar calculations are did in\cite{lee2023symmetry,zhu2023nishimori, lee2022decoding}. 

\subsection{Strange correlator in typical trajectories}
As was explained in the main context, to facilitate the calculation of mutual information in the post-measurement density matrix $\rho^{D}_m$, we want to compute the type-\RN{1} strange correlator $\text{SC}^{I}$ with respect to the trivial density matrix $\rho_{m}$. \newline
\indent A trick to compute the strange correlator is to consider the conjugate channel acting on $\rho_m$ and the resulting operator is given by
\begin{equation}
    \begin{split}
        \mathcal{E}^*[\rho_m] &= \prod_v \frac{1 + m_v \hat{X}_v}{2} \prod_e \mathcal{E}^{*}_e[\frac{1 + m_e \hat{X}_e}{2}] \\
        &= \prod_v \frac{1 + m_v \hat{X}_v}{2} \prod_e \left[ (1 - p)\frac{1 + m_e \hat{X}_e}{2} + \right.\\
        & ~~~~~~~~~~~~~~~~~~~~~~~~~~~~~ \left.+ p\hat{Z}_e\frac{1 + m_e \hat{X}_e}{2}\hat{Z}_e \right] \\
        &= \prod_v \frac{1 + m_v \hat{X}_v}{2} \prod_e \frac{1 + (1 - 2p) m_e \hat{X}_e}{2}.
    \end{split}
\end{equation}
The denominator can then be computed as
\begin{equation}
    \begin{split}
        &\tr[\mathcal{E}[\rho]\rho_m] \\
        &= \tr[\rho \mathcal{E}^*[\rho_m]]\\
        &= \bra{\psi_0} \prod_v \frac{1 + m_v \hat{X}_v}{2} \prod_e \frac{1 + (1 - 2p)m_e \hat{X}_v}{2} \ket{\psi_0}\\
        &= \frac{2}{2^{N_v - 1}} \sum_{\gamma} \prod_{e \in \gamma} (1-2p)m_e
    \end{split}
\end{equation}
The only terms in the product $\prod_v \frac{1 + m_v \hat{X}_v}{2}$ that would contribute are $\frac{1}{2^{N_v}}$ and $\frac{1}{2^{N_v}} \prod_v \hat{X}_v$. In the product over all edges $e$, only edges that form a closed loop $\gamma$ on the dual lattice would contribute. Moreover, by defining dual Ising spins $\sigma$ of the center of each plaquette, we can rewrite the summation over $\gamma$ as the partition function of a RBIM whose bonds are specified by $m_e$ at inverse temperature $\beta = \tanh^{-1}(1- 2p)$,
\begin{equation}
    p_m = \tr[\mathcal{E}[\rho]\rho_m] \propto \sum_{\sigma} e^{\beta \sum_{\left\langle ij \right\rangle}m_{ij}\sigma_i \sigma_j}.
\end{equation}
We also notice that the above partition function can be interpreted as the gauge-symmetrized probability of uncorrelated bond disorder $\{m_e\}$ with $p(m_e = -1) = \frac{1}{1 + e^{2\beta}}$. The gauge transformation is given by
\begin{equation}
    \sigma_i \rightarrow \tau_i \sigma_i,~~~ m_{ij} \rightarrow \tau_i \tau_j m_{ij}.
\end{equation}

\subsection{Averaging $I^{(m)}(L:R)$ over trajectories}
Here we want to present the detailed calculation of mapping the trajectory averaged mutual information to averaging the mutual information by uncorrelated bond distribution.
\begin{equation}
    \begin{split}
        &\sum_m p_m I_{m}(L:R) \\
        &\propto \sum_m \left(\sum_{\sigma} e^{\beta\sum_{\left\langle ij\right\rangle}m_{ij} \sigma_i \sigma_j} \right) I_m(L:R) \\
        &\propto \sum_{\tau} \left( \sum_{m'} e^{\beta \sum_{\left\langle ij \right\rangle} m_{ij}'} I_{m'}(L:R) \right),
    \end{split}
\end{equation}
where from the second line to the third line we gauge-fixing $m$ to be $m'$ and summing over all possible gauge transformations as $\tau$. Therefore, the mutual information weighted over the gauge-symmetrized probability distribution is proportional to the mutual information weighted over the uncorrelated probability distribution.

\subsection{Critical behavior of $I^{(m)}(L:R)$}
 In this section, we want to explicitly show that at the criticla point $\beta_c$ the mutual information weighted by the uncorrelated bond distribution can be mapped to $\sigma_{v_i}\sigma_{v_j}$ correlation function in the quenched disorder RBIM.
\begin{equation}
    \begin{split}
        &\sum_m p_m (\text{SC}_m^{\RN{1}})^2 \\
        &= \sum_m \left(\sum_{\sigma'} e^{\beta \sum_{ij}m_{ij}\sigma_i' \sigma_j'}  \right) \left\langle \sigma_{i}\sigma_{j} \right\rangle_{m, \beta}^2 \\
        &= \sum_{\tau} (\tau_i \tau_j)^2 \left( \sum_{m'} e^{\beta \sum_{ij}m'} \left\langle \sigma_i \sigma_j \right\rangle_{m', \beta}^2 \right) \\
        &= [\left\langle \sigma_i \sigma_j \right\rangle^2]_{\beta}
    \end{split}
\end{equation}
where $[\cdot]$ denotes the quenched disorder average over uncorrelated bond configurations with $p(m_e = -1) = \frac{1}{1 + e^{2\beta}}$. \newline
\indent It has already been shown that along the Nishimori line $[\left\langle \sigma_i \sigma_j \right\rangle^2]_{\beta} = [\left\langle \sigma_i \sigma_j \right\rangle]_{\beta}$\cite{nishimori1981internal}.

\section{Decohered 2d cluster state away from the fixed point} \label{sec: Decohered 2d cluster state away from the fixed point}
In this section, we consider the 2d cluster state in eq.\eqref{eq: Hamiltonian of the 2d cluster state} under symmetric perturbation and study $I_{c}(L:RM)$ under $Z$-decoherence to edge qubits. The perturbed Hamiltonian reads
\begin{equation} \label{eq: Nonlinearly perturbed 2d cluster state}
    H(\lambda) = -\sum_{e} X_e \prod_{v \in e} Z_v - \sum_v X_v{\prod_{e \ni v} Z_e} + \sum_{v}e^{-\lambda \sum_{e\ni v}X_e},
\end{equation}
where the last term is a nonlinear symmetric perturbation to the fixed point 2d cluster state Hamiltonian. The phase diagram of the model will be clear upon doing a generalized Kennedy-Tasaki(KT) transformation~\cite{li2023non} which transforms the operators as,
\begin{equation}
\begin{split}
    X_e \to X_e&, ~~~X_v \to X_v \\
    X_e \prod_{v \in e} Z_v \to \prod_{v \in e} Z_v&, ~~~ X_v \prod_{e \ni v} Z_e \to \prod_{e \ni v} Z_e
\end{split}
\end{equation}
We then get the Hamiltonian of the dual model, which reads
\begin{equation}
    H_{\text{dual}} = -\sum_{e} \prod_{v \in e} Z_v - \sum_v {\prod_{e \ni v} Z_e} + \sum_{v}e^{-\lambda \sum_{e\ni v}X_e}.
\end{equation}
After taking account into the $\mathbb{Z}_2^{(0)}$ 0-form symmetry $\prod_{v} X_v = 1$ and the $\mathbb{Z}_2^{(0)}$ 1-form symmetry $\prod_{e \in \gamma} X_e = 1$ (for any closed loop $\gamma$ defined on the dual lattice), $H_{\text{dual}}$ describes decoupled GHZ state on the vertices and perturbed Toric Code on the edges. Such a perturbed Toric Code model has been studied in~\cite{castelnovo2008quantum} and its exact groundstate wavefunction is given by
\begin{equation}
\begin{split}
    \ket{\psi_{\text{dual}, e}(\lambda)} \propto e^{\lambda \sum_{e} \frac{\hat{X}_e}{2}} \ket{\psi_{\text{TC}}},
\end{split}
\end{equation}
where $\ket{\psi_{\text{TC}}}$ is the fixed-point Toric code wavefunction on the edges. If we represent the Toric code wavefunction as equal weight superposition of loop configurations, the operator $e^{\lambda \sum_e \frac{\hat{X}_e}{2}}$ can be thought of imposing loop tensions to them. It is known that such a perturbed TC model undergoes a transition from topologically ordered phase to trivial phase at $\lambda = 1$, and the cirtical point is in the 2d Ising universality class.

We can infer the phase diagram of the original model in eq.\eqref{eq: Nonlinearly perturbed 2d cluster state} from that of the dual model. The complete groundstate wavefunction of the dual model reads
\begin{equation}
\begin{split}
    \ket{\psi_{\text{dual}}(\lambda)} \propto e^{\lambda \sum_{e} \frac{\hat{X}_e}{2}} \ket{\text{GHZ}} \otimes \ket{\psi_{\text{TC}}},
\end{split}
\end{equation}
where $\ket{\text{GHZ}}$ is the GHZ-state wavefunction on the vertices.
Since the term $e^{\lambda \sum_{e} \frac{\hat{X}_e}{2}}$ remains unchanged under the KT transformation, the groundstate wavefunction of the original model is simply given by
\begin{equation} \label{eq: perturbed 2d cluster state wf}
    \ket{\psi(\lambda)} \propto e^{\lambda \sum_{e} \frac{\hat{X}_e}{2}} \ket{\psi_{0}},
\end{equation}
where $\ket{\psi_{0}}$ is the 2d cluster state fixed-point wavefunction. From the phase diagram of the dual model, we can conclude that when $\lambda < 1$, $\ket{\psi (\lambda)}$ is a wavefunction in the $Z_2^{(0)} \times Z_2^{(1)}$ SPT phase and when $\lambda > 1$, $\ket{\psi (\lambda)}$ is a wavefunction in the trivial phase.

\subsection{Strange Correlator in the presence of noise}
In the presence of Pauli-Z decoherence on edge qubits, we want to study how much quantum information can be transmitted between two vertex qubits following similar procedure as in Sec~.\ref{sec: connection to type I SC}. \newline
\indent As was discussed in Sec.~\ref{sec: connection to type I SC}, we want to compute the trajectory averaged mutual information $I_m(L:R)$, which simplifies to the calculation of type-\RN{1} strange correlators when different trivial density matrices $\rho_m$ are inserted.\newline
\indent The denominator of $\text{SC}^{\RN{1}}$ can be computed as
\begin{equation}
    \begin{split}
        &\tr[\mathcal{E}[\rho] \rho_m] \\
        &= \tr[\rho \mathcal{E}^*[\rho_m]] \\
        &\propto \bra{\psi_0} e^{\lambda \sum_{e} \hat{X}_e} \prod_v \frac{1 + m_v \hat{X}_v}{2} \prod_{e} \frac{1 + (1 - 2p)m_e \hat{X}_e}{2} \ket{\psi_0},
    \end{split}
\end{equation}
the only terms in the product $\prod_v \frac{1 + m_v \hat{X}_v}{2}$ that would contribute are $\frac{1}{2^{N_v}}$ and $\frac{1}{2^{N_v}} \prod_v \hat{X}_v$, which after taking the expectation value gives $\frac{1}{2^{N_v - 1}}$. The rest of the terms can further be computed as
\begin{equation}
    \begin{split}
        &\tr[\mathcal{E}[\rho] \rho_m] \\
        &\propto \bra{\psi_0} \prod_e \left[(\cosh(\lambda) + \sinh(\lambda)\hat{X}_e) \frac{1 + (1 - 2p)m_e \hat{X}_e}{2}\right] \ket{\psi_0} \\
        &\propto \bra{\psi_0} \prod_e \Big[1 + \tanh(\lambda)(1- 2p)m_e \\
        &~~~~~~~~~~~~~~+ ({\tanh(\lambda) + (1 - 2p)m_e})\hat{X}_e \Big] \ket{\psi_0} \\
        &\propto \sum_{\gamma} \left(\prod_{e \in \gamma} [\tanh(\lambda) + (1-2p)m_e]\hat{X}_e\right) \\
        &~~~~~~~~~~~~~~\left( \prod_{e \notin \gamma}[1 + \tanh(\lambda)(1-2p)m_e]\right),
    \end{split}
\end{equation}
where to get the last line, we notice that in the product over all edges $e$, only edges that form a closed loop $\gamma$ on the dual lattice would contribute. Moreover, we can represent the summation over all possible loops configurations on the dual lattice as a partition function of an Ising model, where the Ising spins $\sigma_i$ are placed on the center of each plaquette,
\begin{equation}\label{eq: stat model of perturbed 2d cluster state}
    \begin{split}
        \tr[\mathcal{E}[\rho_0]\rho_m] \propto \sum_{\{\sigma\}} e^{\sum_{<ij>} \beta J_{ij}\sigma_i \sigma_j},
    \end{split}
\end{equation}
where for clarity we relabel each edge $e$ by $<ij>$ which denotes the bond that connects the two nearest-neighbor sites $i$ and $j$ on the dual lattice. The Ising coupling is given by
\begin{equation} \label{eq: coupling const of perturbed 2d cluster state}
    \beta J_{ij} = \tanh^{-1} \left[ \frac{\tanh(\lambda) + (1 - 2p)m_{ij}}{1 + \tanh(\lambda)(1 - 2p)m_{ij}} \right].
\end{equation}

Employing the same techniques, we can compute the numerator of $\text{SC}^{\RN{1}}$,
\begin{equation}
    \begin{split}
        &\tr[\mathcal{E}[\rho] Z_{v_i} Z_{v_j} \rho_m]\\
        &\propto \bra{\psi_0} e^{\lambda \sum_{e} \frac{\hat{X}_e}{2}} Z_{v_i} Z_{v_j} \prod_{v} \frac{1 + m_v X_v}{2} \prod_{e} \frac{1 + (1 - 2p)m_e X_e}{2} \\
        &~~~~~~~~~~~~~~~~~~~~~~~~~~~~~~~~~~~~~~~~~~~~~~~~~~e^{\lambda \sum_{e} \frac{\hat{X}_e}{2}} \ket{\psi_0} \\
        &\propto \bra{\psi_0} e^{\lambda \sum_{e} \frac{\hat{X}_e}{2}} \prod_{e \in l} X_e \prod_{v} \frac{1 + m_v X_v}{2} \prod_{e} \frac{1 + (1 - 2p)m_e X_e}{2} \\
        &~~~~~~~~~~~~~~~~~~~~~~~~~~~~~~~~~~~~~~~~~~~~~~~~~~e^{\lambda \sum_{e} \frac{\hat{X}_e}{2}} \ket{\psi_0} \\
        &\propto \bra{\psi_0} e^{\lambda \sum_{e} \hat{X}_e} \prod_{e \in l} X_e \prod_{e} \frac{1 + (1 - 2p)m_e X_e}{2} \ket{\psi_0} \\
        &\propto \bra{\psi_0} \prod_{e \in l} \hat{X}_e \prod_e \Big[1 + \tanh(\lambda)(1- 2p)m_e \\
        &+ ({\tanh(\lambda) + (1 - 2p)m_e})\hat{X}_e\Big] \ket{\psi_0}
    \end{split}
\end{equation}
where we have used the fact $Z_{v_i}Z_{v_j}$ when hit on $\bra{\psi_0}$ gives $\prod_{e \in l}X_e$ and $l$ is a string on the direct lattice connecting $v_i$ and $v_j$. Since only closed loop of $\hat{X}_e$ has non-zero expectation value, the numerator of $\text{SC}^{\RN{1}}$ can be written as
\begin{equation}
\begin{split}
    &\tr[\mathcal{E}[\rho] Z_{v_i} Z_{v_j} \rho_m]\\
    &\propto \sum_{\gamma'} \left(\prod_{e \in \gamma'} [\tanh(\lambda) + (1-2p)m_e]\hat{X}_e\right) \\
    &\left( \prod_{e \notin \gamma'}[1 + \tanh(\lambda)(1-2p)m_e]\right) \\
    &\propto \left\langle \sigma_{v_i} \sigma_{v_j} \right\rangle_{\beta J}
\end{split}
\end{equation}
where $\gamma'$ is an open string whose end points are $v_i$ and $v_j$. Therefore we can see the numerator is nothing but the $\sigma_{v_i}\sigma_{v_j}$ correlation function of the stat mech model we find in Eq.\eqref{eq: stat model of perturbed 2d cluster state}. \newline

\subsection{Behavior of $I_c(L:RM)$ at small $\lambda$}

\indent When $ 0 <\lambda \ll 1$, the coupling constant in Eq.\eqref{eq: coupling const of perturbed 2d cluster state} is approximatly $\tanh^{-1}(1- 2p) m_{ij} + \lambda + O(\lambda^2)$.
Therefore, we can identify the statistical mechanics model in Eq.\eqref{eq: stat model of perturbed 2d cluster state} governed by a perturbed RBIM Hamiltonian
\begin{equation}
    H[m, \lambda] = -\sum_{\left\langle i,j \right\rangle} \left( m_{ij} \sigma_i \sigma_j + \frac{\lambda}{\tanh^{-1}(1-2p)} \sigma_i \sigma_j \right)
\end{equation}
at inverse temperature $\beta = \tanh^{-1}(1-2p)$. Such a Hamiltonian describes a disorder ensemble where a bond configuration $m$ occur with probability \begin{equation} \label{eq: perturbed partition function}
    \begin{split}
        p\left(m, \lambda \right) \propto Z\left( m, \lambda\right) \equiv \sum_{\{\sigma\}} e^{-\beta H[m, \lambda]}.
    \end{split}
\end{equation} \newline
\indent Although we are not able to solve the above model exactly, we notice that the perturbation $\frac{\lambda}{\tanh^{-1}(1-2p)} \sigma_i \sigma_j$ favors the ferromagnetic phase. Compared to the partition function of RBIM along the Nishimori line in \eqref{eq: RBIM partition function} (which is also the $\lambda = 0$ limit in \eqref{eq: stat model of perturbed 2d cluster state}), at the same inverse temperature $\beta$, $Z\left(m, \lambda \right) > Z\left( m, \lambda = 0\right)$. So we expect at low temperature, bond configurations $m$ with ferromagnetic/long-ranged order (as diagonoised by $\abs{\left\langle \sigma_{v_i} \sigma_{v_j} \right\rangle_{\beta, m, \lambda}}$) are weighted more as we turn on the interaction. Since the RBIM along the Nishimori line can stabilize a ferromagnetic phase at low temperature, we also expect the disorder ensemble described by \eqref{eq: perturbed partition function} can stabilizer a ferromagnetic phase. While at high enough temperature, the ensemble enters the paramagnetic phase. Similar to what we have argued in Sec~.\ref{sec: connection to type I SC}, averaging the mutual information $I^{(m)}(L:R)$ with probability $p(m,\lambda)$ is greater than $1$ in the ferromagnetic phase and equal to $1$ in the paramagnetic phase. Therefore, the phase diagram outlined by $I_{c}(L:RM)$ is similar to the one we find in Sec~.\ref{sec: connection to type I SC}, where it is greater than $0$ at small decoherence strength and equals to $0$ at large decoherence strength. The phase transition that separates the two phases is likely to be in the RBIM universality class.

\section{Mixed SPT as quantum channels}\label{appendix: quantum channel}
\subsection{1d Cluster state}
We start with a cluster state of size $2N+1$ where the odd and even sublattice both have $Z_2$ symmetry generated by $\prod_{k=0}^{N-1} X_{2k}$ and $\prod_{k=0}^{N-1} X_{2k+1}$. We label the even and odd sublattice as $A$ and $B$ respectively. We denote the qubit at $0$ as the left boundary $L$ and the qubit at $2N$ as the right boundary $R$ so that both boundaries lie in the $A$ sub-lattice. We measure all qubits except the $L,R$ in the $X$ basis. Without decoherence, it is known that measuring these qubits implements the following channel on the $L$ qubit,
\begin{align}
    \rho_L(T) = \sum_{\{m\}} \cdots &X^{m_3} Z^{m_2} X^{m_1} \rho_L(0) X^{m_1} Z^{m_2} X^{m_3} \cdots \otimes \nonumber \\  &\ket{m_1,m_2,m_3,\dots}\bra{m_1,m_2,m_3,\dots},
\end{align}
where $T=2N-1$ is the virtual time for which qubit at $L$ is evolved; $m_1,m_2,\dots $ are the measurement outcomes of the bulk measurement. The evolution on $L$ is unitary if all measurement outcomes are known. In other words, the information about $L$ is transferred in the virtual time direction \textit{unitarily}.

Now we decohere the $B$ sub-lattice, which are odd qubits. The decoherence is assumed to act in the following way\begin{align*}
    \mathcal{N}_Z[\rho] = (1-p)\rho + pZ\rho Z.
\end{align*}
The qubits on the $B$ sub-lattice are measured after the decoherence has acted. Since \begin{align*}
    \mathcal{N}_i \left[\frac{1\pm X}{2}\rho \frac{1\pm X}{2}\right] = 
    &(1-p)\frac{1\pm X}{2}\rho \frac{1\pm X}{2} + \\
    + &p\frac{1\mp X}{2}\rho \frac{1+\mp X}{2},
\end{align*}
the decoherence flips with the measurement outcome with probability $p$. The channel on $L$ in the presence of the decoherence is now given by,
\begin{align}
    \rho_L(T) = \sum_{\{m\}}\rho_{\{m\},L}  \otimes \ket{m_1,m_2,m_3,\dots}\bra{m_1,m_2,m_3,\dots},
\end{align}
with \begin{align}
    \rho_{\{m\},L}  & = \prod_{t\in \mathrm{odd}} \mathcal{N}_{X} \left[ \left( Z^{m_{2t}} X^{m_{2t-1}} \right)^\dagger \rho_L(0)  X^{m_{2t-1}}Z^{m_{2t}} \right]
\end{align}
See Fig.~\ref{fig:spt as channel}. For a given measurement outcome $\{m\}$ the evolution of $L$ also has $X$ decoherence with strength $p$.The amount of information stored in a qubit transmitted in the presence of decoherence at late times is given approximately by $(1-2p)^{2N}/(2\ln 2)$. Thus $I_c(L:RM)=(1-2p)^{2N}/(2\ln 2)$. This is an independent calculation of $I_c(L:RM)$ and matches with the calculation done using eq.\eqref{eq: Ic(RM) Z dephaing one sublattice}.

\begin{figure}[bt!]
    \centering
    \includegraphics[width=\linewidth]{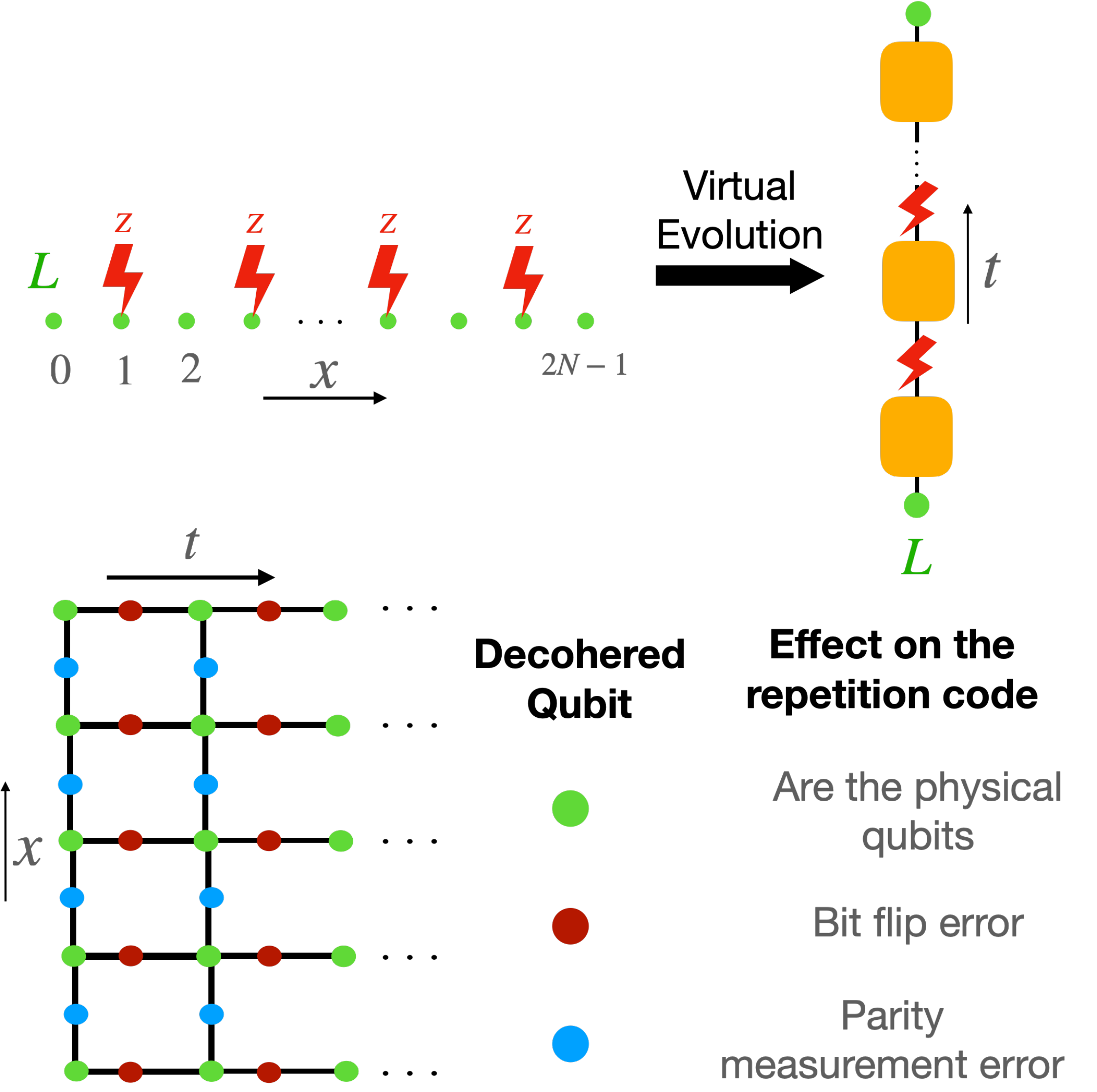}
    \caption{SPTs for pure states are known to be resource states for measurement-based quantum computing (MBQC). Decoherence in the SPT leads to noisy evolution of MBQC in the virtual time direction. \textit{Top.} Noise on one sub-lattice of the 1d cluster state leads to $X-$error on the virtual time evolution. \textit{Bottom.} 2d cluster state is a resource for transmission of 1d repetition code post-measurement. Decoherence on the edge qubits leads to bit-flip and measurement errors in the virtual dynamics of the repetition code. Error threshold in the virtual dynamics is related to the transitions for mixed-state SPT.}
    \label{fig:spt as channel}
\end{figure}

\subsection{2d Cluster state}
The 2-dimensional cluster state is a foliated version of the repetition code~\cite{Bolt_2016}. Let the qubits at the vertices of one boundary be the $L$ boundary and those at the opposite side be the $R$ boundary. The measurements of the edge qubits and the bulk vertex qubits implement a coherent evolution of the boundary. Fig.~\ref{fig:spt as channel} shows a layer of the cluster state. The green qubits at the left boundary are encoding the repetition code. Measuring blue qubits amounts to performing parity measurement $Z_i Z_{i+1}$ in the repetition code. The red qubits are measured in X basis and if the measurement outcome is $-1$ then a bit flip occurs at the corresponding green qubit. 

Decoherence on links (red and blue qubits) introduce errors in the measurement outcomes of these qubits, similar to the 1d case. The measurement error on the blue qubits will lead to a measurement error in the parity check and error on the outcome of red qubits results in X-dephasing of the repetition code. Thus the post-measurement state of the cluster state can be thought of as dynamics of the repetition code under X-dephasing and parity measurement error with equal strength $p$. This dynamic is known to have an error threshold below which the logical space of the repetition code is protected against X-dephasing and gets destroyed above the threshold. The quantum error correction transition is known to be described by the random bond Ising model (RBIM).

\end{document}